\definecolor{light-gray}{gray}{0.7}
\algrenewcommand\algorithmicindent{0.43em}
\newtheorem{theorem}{Theorem}[section]
\newtheorem{proposition}[theorem]{Proposition}
\newtheorem{corollary}[theorem]{Corollary}
\newcommand{\calN}{{{\cal N}}}
\theoremstyle{definition}
\newtheorem{definition}[theorem]{Definition}
\newtheorem{example}[theorem]{Example}
\begin{document}
% The file aaai.sty is the style file for AAAI Press
% proceedings, working notes, and technical reports.
%
\title{Individual-Based Stability in Hedonic Diversity Games
\thanks{A shortened version of this paper appears in the proceedings of AAAI-20}}
\author{Niclas Boehmer\thanks{
Most of this research was done when the first author was an MSc student 
at the University of Oxford}\\ 
TU Berlin \\ Berlin, Germany \\
niclas.boehmer@tu-berlin.de
\And    Edith Elkind \\ University of Oxford \\ Oxford, UK\\
elkind@cs.ox.ac.uk}
\maketitle
\begin{abstract}
In {\em hedonic diversity games (HDGs)}, recently introduced by \citet{BEI19}, 
each agent belongs to one of two classes (men and women, 
vegetarians and meat-eaters, junior and senior researchers), and agents'
preferences over coalitions are determined by the fraction of agents
from their class in each coalition. Bredereck et al.%~consider common hedonic games'
%solution concepts for this setting. They 
~show that while an HDG may fail to have
a Nash stable (NS) or a core stable (CS) outcome, every HDG 
in which all agents have single-peaked preferences admits an individually stable (IS) outcome, 
which can be computed in polynomial time. In this work, 
we extend and strengthen these results in several ways. 
First, we establish that the problem of deciding if an HDG has an NS
outcome is NP-complete, but admits an XP algorithm with respect to the size of the smaller class. 
Second, we show that, in fact, all HDGs
admit IS outcomes that can be computed in polynomial time;
our algorithm for finding such outcomes is considerably simpler 
than that of Bredereck et al. We also consider two ways of generalizing 
the model of Bredereck et al.~to $k\ge 2$ classes. 
%but show that
%our positive results for individual stability do not extend to this more general setting:
%for $k>2$ an IS outcome may fail to exist and deciding 
%if a game admits an IS outcome is NP-hard. 
We complement our theoretical results by empirical analysis,  
comparing the IS outcomes found by our algorithm, the algorithm
of Bredereck et al.~and a natural better-response dynamics. %according to several criteria. 
%such as, e.g., the average social welfare and the expected coalition size.  
\end{abstract}

\section{Introduction}
A number of French exchange students at a Spanish university have signed up for a game theory class.
All students taking the class are advised to form study groups to discuss the material and to work
on problem sheets. Now, some French students see this as an excellent opportunity to improve
their Spanish and would like to join study groups where no one else speaks French. Other students
have less confidence in their ability to communicate in Spanish and therefore want to be in a group
where at least a few other students speak French; in fact, some of the students prefer to be in
an exclusively French-speaking group. Spanish students, too, have different preferences over the fraction
of French students in their groups: some are eager to meet new friends, while others are worried
that language issues will affect their learning.

Many important aspects of the setting described in the previous paragraph can be captured
by the recently introduced framework of {\em hedonic diversity games (HDG)} \citep{BEI19}. In these 
games agents can be split into two classes (say, {\em red} and {\em blue}), and each 
agent has preferences over the fraction of red agents in their group. The outcome of a game
is a partition of agents into groups. This model is relevant 
for  analyzing a variety of application scenarios, ranging from interdisciplinary collaborations
to racial segregation. 

In their work, Bredereck et al.~aim to understand whether HDGs admit stable outcomes, 
for several common notions of stability for hedonic games, such as 
Nash stability, individual stability and core stability; the first two concepts are based
on deviations by individual agents, while the third concept captures resilience
against group deviations. Bredereck et al.~show that an HDG may fail to have a Nash 
stable outcome or a core stable outcome and that deciding if an HDG has a core stable
outcome is NP-complete. For individual stability, they get a positive result under the additional 
assumption that agents' preferences are single-peaked, i.e., each agent $i$ 
has a preferred ratio of red agents in her group (say, $\rho_i$), 
and for any two ratios $\rho, \rho'$ such that
$\rho<\rho'\le \rho_i$ or $\rho_i\le \rho'<\rho$ she prefers 
a group with ratio $\rho'$ to a group with ratio $\rho$.
Specifically, Bredereck et al.~show that every HDG with single-peaked preferences admits
an individually stable outcome and describe a polynomial-time algorithm for finding some 
such outcome. Their work leaves open the question whether HDGs with non-single-peaked 
preferences always have an individually stable outcome.

\smallskip

\noindent{\bf Our Contribution\ } In this paper, we answer two open questions from the paper
of Bredereck et al., as well as extend their model to an arbitrary 
number of agent classes. 

First (Section~\ref{sec:ns}), we show that deciding if an HDG has a Nash stable outcome 
is NP-complete. Our hardness result holds even if agents have dichotomous preferences, 
i.e., approve some ratios and disapprove the remaining ratios; in fact, it remains true
if each agent approves at most $4$ ratios. On the other hand,  we show that
the existence of a Nash stable outcome can be decided in polynomial time if the size
of one of the classes can be bounded by a constant. 

We then turn our attention 
to individual stability (Section~\ref{sec:is}). We describe a polynomial-time algorithm that finds
an individually stable outcome of any HDG; our algorithm is significantly simpler 
than that of Bredereck et al. However, we show that neither algorithm Pareto-dominates
the other: there are settings where the algorithm of Bredereck et al.~produces
a much better partition than our algorithm, and there are settings where the converse
is true.

In Section~\ref{sec:ktypes}, we consider two different ways of extending HDGs
to settings with $k$ agent classes, $k>2$. First, we consider a very general model, 
where each agent may have
arbitrary preferences over the ratios of different classes in her group. 
We show that our positive result for individual stability does not extend to this model:
we describe a game with $k=3$ that has no individually stable outcomes and prove that 
deciding the existence of such outcomes is NP-complete if $k\ge 5$.
We then propose a more restrictive model, where an agent only cares about the fraction 
of agents that belong to her class. We show that this model encompasses both HDGs
and another well-known class of hedonic games, namely, anonymous games, i.e., 
games where agents have preferences over the size of their group.

In Section~\ref{sec:emp}, we empirically compare the outcomes produced 
(1) by our algorithm for finding individually stable outcomes, 
(2) by the algorithm of Bredereck et al. and 
(3) by a natural better-response dynamics,
with respect to several measures, such as the average social welfare and the diversity 
of resulting groups.
We summarize our findings in Section~\ref{sec:concl}.
\smallskip

\noindent{\bf Related Work\ }
Hedonic games were introduced by \citet{Dreze1980} and have received a lot of attention
in the computational social choice literature, as they offer a simple, but
powerful formalism to study group formation in strategic settings; see, e.g., the survey
by \citet{Aziz2016}.
Hedonic diversity games share common features with two other well-studied classes of hedonic games, 
namely, anonymous games \citep{Bogomolnaia2002} and fractional hedonic games \citep{AzizBBHOP17}. 

In anonymous games, agents cannot distinguish among other agents, and therefore the only feature 
of a group that matters to them is its size. HDGs may appear to be more general than
anonymous games, since in HDGs each agent can distinguish between two classes of agents. 
Indeed, many proof techniques developed for anonymous games turn out to be relevant for HDGs. 
However, in a technical sense, anonymous games are not a subclass of HDGs, 
and some of the positive results for HDGs do not hold for anonymous games.
In particular, while we prove that every HDG has an individually stable outcome, 
\citet{Bogomolnaia2002}
show that this is not the case for anonymous games.
Throughout the paper, we compare our results for HDGs to relevant results 
for anonymous games, and in Section~\ref{sec:ktypes} we propose a succinct
representation formalism for hedonic games that captures 
both HDGs and anonymous games.

In fractional hedonic games, every agent assigns a numerical value to every other agent, 
and an agent's value for a group of size $s$ that includes her is equal to the sum of 
the values she assigns to the group members, divided by $s$. Now, if agents are divided
into two classes, so that each agent assigns the same value to all agents in each class, 
the resulting game is a hedonic diversity game. However, not all hedonic diversity games 
can be obtained in this fashion; in particular, an HDG where each agent prefers groups
that have the same number of agents from each class cannot be represented in this way.
Conversely, there are fractional hedonic games that cannot be represented as HDGs. 

%%%%%%%%%%%%%%%%%%%%%%%%%%%%%%%%%%%%%%%%%%%%%%%%%%%%%%%%%%%%%%%%%%%%%%%%%%%%

\section{Preliminaries}\label{sec:prelim}
For every positive integer $n$, we write $[n]$ to denote the set $\{1, \dots, n\}$.

A {\em hedonic game} is a pair $G = (N, (\succeq_i)_{i\in N})$, 
where $N=[n]$ is the set of {\em agents}, 
and for each $i\in N$ the relation $\succeq_i$ is a weak order over all subsets of $N$
that contain $i$. 
The subsets of $N$ are called {\em coalitions}; the set of all
coalitions containing agent $i$ is denoted by $\calN(i)$. 
We refer to the set $N$ as the {\em grand coalition}. 

Given two coalitions $C, D\in\calN(i)$, 
we write $C\sim_i D$ if $C\succeq_i D$ and $D\succeq_i C$;
we write $C\succ_i D$ if $C\succeq_i D$ and $C\not\sim_i D$. 
We say that $i$ {\em weakly prefers}
$C$ to $D$ if $C\succeq_i D$; if $C\succ_i D$, we say that $i$
{\em strictly prefers} $C$ to $D$, and if $C\sim_i D$,
we say that $i$ is {\em indifferent} between $C$ and $D$. 
For succinctness, when describing agents' preferences, we often omit 
coalitions $C$ with $\{i\}\succ_i C$. 

An {\em outcome} of a hedonic game with the set of agents $N$ is a partition 
$\pi=\{C_1, \dots, C_k\}$ of $N$; 
we write $\pi_i$ to denote the coalition in $\pi$ that contains agent $i$.
An agent $i$ has an {\em NS-deviation} from an outcome $\pi$ if there exists a coalition
$C\in \pi\cup\{\varnothing\}$ such that $C\cup\{i\}\succ_i\pi_i$;
$i$ has an {\em IS-deviation} from an outcome $\pi$ if there exists a coalition
$C\in \pi\cup\{\varnothing\}$ such that $C\cup\{i\}\succ_i\pi_i$ and, 
additionally, $C\cup\{i\}\succeq_j C$ for each $j\in C$. An outcome $\pi$
is {\em Nash stable (NS)} (respectively, {\em individually stable (IS)}) if no agent has
an NS-deviation (respectively, an IS-deviation). An outcome $\pi'$
{\em Pareto dominates} an outcome $\pi$
if $\pi'_i\succeq\pi_i$ for each $i\in N$ and $\pi'_j\succ_j \pi_j$
for some $j\in N$; an outcome is {\em Pareto optimal} if it is not Pareto-dominated
by another outcome. 
%An outcome $\pi$ is 
%{\em individually rational for an agent $i$} is $\pi_i\succeq_i \{i\}$; $\pi$ is 
%{\em individually rational} if it is individually rational for all $i\in N$.

Consider a hedonic game $G=(N, (\succeq_i)_{i\in N})$. 
We say that $G$ is {\em dichotomous}
if for every agent $i\in N$  
%{\em strict} if $C\not\sim_i D$
%for every pair of distinct coalitions $C, D\in\calN(i)$.
%We say that $i$'s preferences are 
there exist disjoint sets $\calN^+(i)$ and $\calN^-(i)$
such that $\calN(i)=\calN^+(i)\cup\calN^-(i)$, and
for every pair of coalitions $C, D\in\calN(i)$,
we have $C\sim_i D$ if $C, D\in\calN^+(i)$
or $C, D\in\calN^-(i)$ and $C\succ_i D$
if $C\in\calN^+(i)$, $D\in \calN^-(i)$;
we say that $i$ {\em approves} coalitions in $\calN^+(i)$
and {\em disapproves} coalitions in $\calN^+(i)$.
We say that $G$
%A hedonic game $G=(N, (\succeq_i)_{i\in N})$ with $|N|=n$ 
is {\em anonymous}
if %agents compare coalitions based on their sizes only:
for every agent $i\in N$ 
there exists a weak order $\succeq^*_i$ on $[|N|]$ such that 
for every pair of coalitions $C, D$ we have 
$C\succeq_i D$ if and only if $|C|\succeq^*_i |D|$.

In a {\em hedonic diversity game (HDG)}, %\citep{BEI19} 
the set of agents $N$ is partitioned as $N=R\cup B$, 
and each agent $i$ is indifferent between any two coalitions $C, D\in\calN(i)$
that have the same fraction of agents in $R$; we refer to agents in $R$ and $B$
as {\em red} and {\em blue} agents, respectively. 
%i.e., $\frac{|C\cap R|}{|C|}=\frac{|D\cap R|}{|D|}$. 
In such games, the preferences
of every agent $i$ can be described by a weak order $\succeq'_i$ over the set 
$\Theta = \{\frac{j}{k}\mid 0\le j\le |R|, j\le k\le |N|\}$: 
for all $C, D\in \calN(i)$ we have $C\succeq_i D$ if and only if 
$\frac{|C\cap R|}{|C|}\succeq'_i \frac{|D\cap R|}{|D|}$.
We say that a coalition $C$ is {\em homogeneous} if $C\subseteq R$ or $C\subseteq B$.

An anonymous game $(N, (\succeq_i)_{i\in N})$ is said to be {\em single-peaked}
if for every agent $i\in N$ there exists a preferred size $s_i\in [n]$
such that for every pair of coalitions $C, D\in\calN(i)$ with
$|C|<|D|\le s_i$ or $s_i\le |D|<|C|$ it holds that $D\succeq_i C$.
Similarly, a hedonic diversity game $(R\cup B, (\succeq'_i)_{i\in R\cup B})$
is said to be {\em single-peaked} if for every agent $i\in R\cup B$
there exists a preferred value $\rho_i\in\Theta$ such that for every $\rho, \rho'\in\Theta$
such that $\rho_i\le\rho<\rho'$ or $\rho'<\rho\le \rho_i$ it holds that $\rho\succeq'_i \rho'$.

In what follows, we use the fact that the problem {\scshape Exact Cover By 3-Sets} 
({\sc X3C}) is NP-complete \citep{gj}. An instance of this problem is given 
by a set $X=\{1,\dots,m\}$ and a collection $C=\{A_{1}, \dots, A_{k}\}$ 
of 3-element subsets of $X$. It is a yes-instance if there exists a subset 
$C'\subseteq C$ such that $C'$ is a partition of $X$; otherwise it is a no-instance. 
For each $x\in X$, let $J^{x}=\{j^{x}_{1},\dots,j^{x}_{m_{x}}\}$ 
be the set of all indices of sets in $C$ to which $x$ belongs, 
i.e., $ j\in J^{x} $ if and only if $x\in A_{j}$. 

%%%%%%%%%%%%%%%%%%%%%%%%%%%%%%%%%%%%%%%%%%%%%%%%%%%%%%%%%%%%%%%%%%%%%%%%%%%%%%%%%
\section{Nash Stability}\label{sec:ns}
\citet{BEI19} show that a hedonic diversity game may fail to have a Nash stable
outcome, even if agents' preferences are single-peaked: indeed, it is easy to see
that an HDG with $|R|=|B|=1$ where the red agent prefers to be alone
and the blue agent prefers to be with the red agent has no NS outcome.
However, Bredereck et al.~do not consider the complexity of checking whether a given 
HDG admits a Nash stable outcome. We will now establish that this problem is NP-complete.
We note that deciding whether an anonymous game has a Nash stable outcome is 
known to be NP-complete as well \citep{Ballester2004}; \cite{peters2016dichotomous} 
shows that the hardness result holds even for anonymous games with dichotomous preferences.

\begin{theorem}\label{thm:ns}
Given an HDG $G = (R\cup B, (\succeq'_i)_{i\in R\cup B})$, it is NP-complete to decide
whether $G$ has a Nash stable outcome. The hardness result holds even if $G$
is dichotomous or if each relation $\succeq'_i$ is a strict order over $\Theta$.
\end{theorem}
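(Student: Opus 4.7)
The plan is to prove NP-membership by inspection and to establish NP-hardness by a reduction from \textsc{X3C}, covering both the dichotomous and the strict-order restrictions of the theorem.

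\textbf{Membership in NP.} Given a candidate partition $\pi$, for every agent $i\in N$ and every $C\in\pi\cup\{\varnothing\}$ I compute the ratios $|\pi_i\cap R|/|\pi_i|$ and $|(C\cup\{i\})\cap R|/|C\cup\{i\}|$ and use $\succeq'_i$ to check for a strict improvement; this runs in polynomial time.

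\textbf{Hardness via reduction from \textsc{X3C}.} Given an instance $(X,C)$ with $X=[m]$ and $C=\{A_1,\ldots,A_k\}$, I construct a dichotomous HDG as follows. For each $x\in X$ introduce a blue \emph{element agent} $e_x$. For each set $A_j$ introduce a \emph{gadget} consisting of red and blue \emph{filler} agents; the filler counts are tuned so that (a) completing $A_j$'s gadget with the three element agents $\{e_x:x\in A_j\}$ yields a ratio $\rho^*_j$ that is unique to $A_j$ and differs from any ratio obtainable by unintended mergers or partial fillings; (b) the fillers of $A_j$ approve only $\rho^*_j$; and (c) element agent $e_x$ approves exactly the ratios $\{\rho^*_j : j\in J^x\}$. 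To rule out NS outcomes in which a gadget is only partially filled, attach to each gadget a small \emph{instability widget}, modeled on the two-agent example recalled immediately before the theorem statement, whose approvals are calibrated so that the widget admits a locally stable configuration if and only if its host gadget is either fully completed or entirely dissolved into singletons. If $(X,C)$ admits an exact cover $C'$, placing the three element agents of each $A_j\in C'$ with its gadget and collapsing the remaining gadgets into their dissolved widget configurations gives an NS outcome. Conversely, in any NS outcome, the widgets force every $e_x$ to lie in a coalition of some ratio $\rho^*_j$ with $j\in J^x$; uniqueness of $\rho^*_j$ then implies the coalition is exactly a completed copy of $A_j$'s gadget, and the resulting set of completed gadgets forms an exact cover.

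\textbf{Strict-order variant and main obstacle.} To handle strict preferences I refine each dichotomous order to a strict linear order, placing approved ratios at the top and disapproved ones at the bottom in an arbitrary but consistent manner; whenever this refinement creates a spurious upward deviation for a satisfied agent, I duplicate the element agent, giving each copy a single approved $\rho^*_j$, which restores correctness without affecting the reduction size. The chief technical difficulty is the arithmetic design of the filler counts in step~(a): I must ensure that $\rho^*_1,\ldots,\rho^*_k$ are pairwise distinct and that no unintended union or partial union of agents across gadgets realizes an approved ratio for any of its members. This number-theoretic bookkeeping, rather than the structural equivalence, is the part I expect to consume most of the work; once the filler and widget parameters are in place, both directions of the correspondence follow by direct case analysis.
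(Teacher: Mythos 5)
Your high-level skeleton matches the paper's: membership by direct checking, hardness from \textsc{X3C}, one blue agent per element, one coalition ratio per set chosen so that its reduced fraction forces exactly three blue agents into any coalition realizing it, and the dichotomous/strict variants handled by approving the individually rational ratios and breaking ties. But there is a genuine gap at the heart of the reduction: the ``instability widget.'' You assert that each gadget can be equipped with a widget ``calibrated so that the widget admits a locally stable configuration if and only if its host gadget is either fully completed or entirely dissolved into singletons,'' but you give no mechanism, and in an HDG no such local calibration is available: an agent's NS-deviation incentives depend only on the multiset of red-ratios of the coalitions present in the partition, not on whether ``its'' gadget is completed. A dissolved gadget (red fillers in singletons) is automatically stable for those fillers whenever no joinable coalition yields their approved ratio, so no widget is needed there; the configuration that actually must be destabilized is one in which an element agent is stranded in a \emph{homogeneous blue} coalition while disapproving it (which, in a dichotomous game, would otherwise be Nash stable for her). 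The paper achieves exactly this with $m$ global ``stalking'' red agents $z_1,\dots,z_m$, where $z_j$ approves only the ratio $\frac{1}{j+1}$: any homogeneous blue coalition of size $j\le m$ attracts an NS-deviation by $z_j$ (which needs no permission), and the resulting ratio $\frac{1}{j+1}$ is individually rational for no blue agent, so no NS outcome can contain a leftover blue coalition. Until you exhibit a concrete widget with this kind of ratio-driven trigger, the backward direction of your reduction does not go through.

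Two smaller points. First, your fix for the strict-order variant --- duplicating an element agent whenever tie-breaking creates a spurious deviation --- would change the number of blue agents that find each $\rho^*_j$ individually rational and hence break the ``exactly the three element agents of $A_j$'' argument; the paper instead observes that arbitrary tie-breaking is harmless because joining any other gadget's coalition shifts its ratio to a value (there, $\frac{2f(i)+3}{2f(i)+7}$) that is never individually rational for a set agent, so no new deviations arise. Second, you put blue fillers inside the gadgets; the paper uses only red fillers precisely so that the count of blue agents approving $\rho^*_j$ is exactly three, which is what pins down the coalition's membership. You correctly identify that the numerators and denominators must be arranged to avoid unintended coincidences (the paper's $f(j)=2j-1-\lfloor j/2\rfloor$ does this), but that bookkeeping is the routine part; the missing destabilization gadget is the essential one.
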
 
\begin{proof}
To see that this problem is in NP, note that it is possible to check in polynomial time 
whether an outcome $\pi$ is Nash stable by iterating over all agents 
and all coalitions in $\pi$ and 
checking whether the agent prefers joining this coalition to her current coalition.

We start by proving hardness for the case where the agents have 
unrestricted preferences. To this end, we construct a reduction from {\sc X3C}. To begin 
with, we define $f(j)=2j-1-\lfloor\frac{j}{2}\rfloor$, which is a bijection from the 
natural numbers to all numbers not divisible by three. Due to reasons that will become apparent 
in the proof of correctness, it will be convenient to use $f(j)$ instead of $j$ in the construction.

In the following, given $X$ and $C$, we construct the respective hedonic diversity game. 
Intuitively, for each element $x\in X$ we introduce a blue set 
agent $b_{x}$ as well as some red agents so as to map each set $A_{j}$ 
to a unique coalition with ratio
$\frac{2f(j)+3}{2f(j)+6}$ consisting of the three relevant set agents and $2f(j)+3$ 
designated red agents. The game is constructed so that in every Nash stable outcome each 
set agent $b_{x}$ needs to be in a coalition corresponding to a set $A_{j}$ with 
$x\in A_{j}$.

\smallskip

\textbf{Construction:} 
For each element $x\in X$, we introduce one blue set agent $b_{x}$ with the following 
preference relation:
$$ 
b_{x}: \frac{2f(j_{1}^{x})+3}{2f(j_{1}^{x})+6}\sim'_{b_{x}}\dots\sim'_{b_{x}}  
\frac{2f(j_{m_{x}}^{x})+3}{2f(j_{m_{x}}^{x})+6}\succ'_{b_{x}} 0.
$$ 
Moreover, for each $j\in [k],$ 
let us introduce $2f(j)+3$ red filling agents with the following preference relation: 
$$
r_{j}^{p}: \frac{2f(j)+3}{2f(j)+6}\succ'_{r_{j}^{p}} 1, \text{ for all }p\in [2f(j)+3].
$$   
In addition, for each $j\in [m]$, we insert a red stalking agent $z_{j}$ with the following 
preference relation: 
$$
z_{j}: \frac{1}{j+1}\succ'_{z_{j}} 1.
$$

\smallskip

\textbf{Correctness:} 
$(\Rightarrow)$ Assume that there exists a partition $\pi \subseteq C$ of $X$. We
transform $\pi$ into a Nash stable outcome $\pi'$ of the respective hedonic diversity 
game. For each $A_{j}\in \pi,$ the outcome $\pi'$ contains 
a coalition $P_{j}=\{b_{x}\mid x\in A_{j}\}\cup 
\{r_{j}^{p}\mid p\in [2f(j)+3]\}$. All remaining agents are put into singleton 
coalitions and are added to $\pi'$.

We claim that $\pi'$ is Nash stable: By construction, no blue agent has an NS-deviation in 
$\pi'$, as all blue agents are in one of their most preferred coalitions. Moreover, no 
filling agent $r_{j}$, $j\in [k]$, has an NS-deviation, as $r_j$ is either in his 
most preferred coalition or in a singleton coalition. In the latter case, $r_{j}$ can only 
deviate to red singleton coalitions or to coalitions with ratio 
$\frac{2f(i)+3}{2f(i)+6}$ for some $i\in [k]$. However, deviating to a red singleton 
coalition does not make a difference for her and deviating to a coalition with ratio 
$\frac{2f(i)+3}{2f(i)+6}$ for some $i\in [k]$ is never individually rational, as for
every $i\in[k]$ we have $\frac{2f(j)+3}{2f(j)+6}\neq \frac{2f(i)+4}{2f(i)+7}$.
Furthermore, applying the same reasoning, no stalking agent $z_{j}$ has an NS-deviation, as 
for every $i\in[k]$ we have $\frac{2f(i)+4}{2f(i)+7}\neq \frac{1}{j+1}$.

\noindent$(\Leftarrow)$ Let us assume that there exists a Nash stable outcome $\pi$ of the 
hedonic diversity game. Then, no coalition of fraction $\frac{1}{j+1}$ for some $j\in [m]$ 
can be part of $\pi$, as such a coalition is not individually rational for any blue agent. 
Consequently, all stalking agents need to be in homogeneous coalitions in $\pi$. Thereby, $\pi$ 
contains no homogeneous blue coalitions of size $j\in [m]$, as $z_{j}$ would have 
an NS-deviation to it. Therefore, all blue agents need to be in one of their most preferred 
coalitions in $\pi$.

Let $\{P_{1},\dots,P_{t}\}\subseteq \pi$ be the set of all coalitions containing at least one 
set agent. Then, the reasoning above shows that for all $\ell\in [t]$ we have 
$\theta(P_\ell)=\frac{2f(j)+3}{2f(j)+6}$ for some $j\in [k]$. As $\gcd(2f(j)+3,2f(j)+6)=1$ 
for all $j\in [k]$, at least three blue and $2f(j)+3$ red agents are part of coalition 
$P_\ell$. Since there only exist three blue agents for which this ratio is individually 
rational, it follows that all three blue set agents who belong to $A_{j}$ (and no other blue 
agent) are part of $P_\ell$.  Consequently, by removing all red agents from the coalitions in 
$\{P_{1},...,P_{t}\},$ we obtain a cover of $X$ by sets in~$C$.

Note that the reduction still works if no indifferences in the preferences are allowed. In 
this case, it is possible to replace the indifferences in the set agents' preference 
relations by strict preferences in an arbitrary way without affecting the correctness of 
the proof: the first direction still holds, as no blue agent can deviate to a different 
coalition such that the resulting coalition is individually rational for her, since 
$\frac{2f(j)+3}{2f(j)+6}\neq \frac{2f(i)+3}{2f(i)+7},$ for all $i,j\in [k]$. The second 
direction remains unaffected by this change.

Furthermore, note that it is also possible to adapt the reduction so that the resulting 
game is dichotomous and every agent approves at most four ratios: each agent approves 
the individually rational part of her preference relation, while she disapproves all other 
fractions. As {\sc X3C} remains NP-complete if we require that every element appears 
in at most three sets \citep{gj}, our problem remains hard if
each agent approves at most four fractions. The proof of 
correctness still goes through in this case, as its first part remains unaffected, while in the 
second part one only has to add the observation that all blue agents need to be in one of 
their approved coalitions in every Nash stable outcome, as $0$ is among their approved 
fractions.
\end{proof}

To complement the hardness result of Theorem~\ref{thm:ns}, we show
that deciding the existence of a Nash stable outcome is in P if 
$\min\{|R|, |B|\}$ is bounded by a constant.

\begin{theorem}\label{thm:ns-xp}
Given an HDG $G = (R\cup B, (\succeq'_i)_{i\in R\cup B})$ with $|R\cup B|=n$,
$\min\{|R|, |B|\}=p$, we can decide
if $G$ has a Nash stable outcome in time $(np)^{p}\cdot\mathrm{poly}(n)$.
\end{theorem}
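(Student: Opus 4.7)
The plan is to exploit the fact that any partition of $R\cup B$ has at most $p$ coalitions containing a red agent. WLOG I assume $|R|=p\le|B|$. Call a coalition \emph{mixed} if it contains at least one red, and characterize each mixed coalition by its \emph{type} $(r,b)$, the numbers of red and blue agents in it. The algorithm enumerates, for each red $i\in R$, a pair $(r^*_i,b^*_i)\in\{1,\dots,p\}\times\{0,1,\dots,|B|\}$ intended to be the type of $i$'s coalition in a candidate NS outcome. There are at most $(p(|B|+1))^p = O((np)^p)$ such enumerations, so the outer loop already matches the claimed bound.

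For each enumeration, I would verify in $\mathrm{poly}(n)$ time whether it extends to an NS outcome. First a \emph{structural consistency} check: for each type $(r,b)$, the number of reds $i$ with $(r^*_i,b^*_i)=(r,b)$ must be divisible by $r$, yielding the number $n_{r,b}$ of mixed coalitions of that type. Then a \emph{red-stability} check: for each red $i$ of type $(r,b)$, her ratio $r/(r+b)$ must be weakly preferred to every deviation ratio $(r'+1)/(r'+b'+1)$ reachable by joining another existing mixed coalition (including another instance of her own type when $n_{r,b}\ge 2$) and to the singleton ratio $1$.

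The main obstacle is that \emph{homogeneous blue coalitions}, of which there may be many of different sizes, are not directly enumerated. I would handle them as follows. For each blue agent $i$, compute the set $V_i$ of mixed types in which $i$ would not deviate, and whether $i$ is happy at ratio $0$ (i.e.\ does not strictly prefer joining any existing mixed coalition). Via bipartite matching, check whether the blues can be distributed so that each mixed type $(r,b)$ receives exactly $n_{r,b}\cdot b$ agents from $\{i:(r,b)\in V_i\}$, with the remaining $T:=|B|-\sum_{(r,b)}n_{r,b}\cdot b$ blues all happy at ratio $0$. Finally, define $G=\{s\in\{1,\dots,T\}: \forall i\in R,\;1/(s+1) \text{ is not strictly preferred by } i \text{ to } r^*_i/(r^*_i+b^*_i)\}$: the homogeneous blue coalitions must have sizes in $G$, and whether $T$ can be written as a sum of elements of $G$ is the standard coin-change problem, decidable in $O(T\cdot|G|)=O(n^2)$ by dynamic programming.

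Correctness follows because an NS outcome induces a consistent enumeration passing all checks, and conversely any enumeration passing all checks yields an NS outcome. The subtle point is that the blues going into the homogeneous pool can be partitioned \emph{independently} of the blue-to-slot assignment: they all see ratio $0$ irrespective of how the pool is subdivided, and the only external constraint is that each used size avoid the ``bad'' values $\{1,\dots,T\}\setminus G$ determined solely by the red enumeration. Combining $O((np)^p)$ enumerations with a $\mathrm{poly}(n)$ check each yields the claimed $(np)^p\cdot\mathrm{poly}(n)$ running time.
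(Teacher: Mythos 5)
Your proposal is correct and follows essentially the same three-stage strategy as the paper's proof: an XP-size enumeration of the structure of all coalitions containing agents of the smaller class, a polynomial-time flow/matching step to place the agents of the larger class into those coalitions or into the homogeneous pool, and a knapsack-style dynamic program to subdivide the leftover agents into homogeneous coalitions of ``safe'' sizes. The only differences are cosmetic: you parametrize the guess by per-agent coalition types (relying on the divisibility check and the interchangeability of same-type agents) rather than by a partition of the smaller class together with target coalition sizes, and you swap which colour plays the role of the smaller class.
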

\begin{proof}
Assume without loss of generality that $|B|\le |R|$, so $p = |B|$.
Our algorithm proceeds in three stages. 

In the first stage, we guess a partition $\{C_1, \dots, C_k\}$ of $B$, 
together with $k$ positive integers $n_1, \dots, n_k$ such that $n_i\ge |C_i|$
and $n_1+\dots+n_k\le n$; let $C_0=\varnothing$ and $n_0=n-(n_1+\dots+n_k)$. 
We will try to construct
an NS partition $\{A_1, \dots, A_k, D_1, \dots, D_\ell\}$ for some $\ell\ge 0$
so that $C_i\subseteq A_i$, $|A_i|=n_i$ for each $i\in [k]$,
and $D_i\subseteq R$ for $i\in [\ell]$. Note that we can enumerate all possible
guesses in time $O(p^p\cdot n^p)$. 

In the second stage, given a guess $\{C_1, \dots, C_k\}$, $(n_1, \dots, n_k)$, we construct
an instance of the network flow problem as follows. We create a source, 
a sink, a node $x_i$ for each $i\in R$ and a node $y_j$ for each $j=0, \dots, k$.
The source is connected to all nodes $x_i$, $i\in R$, by an edge of capacity $1$, and
each node $y_j$, $j=0, \dots, k$, is connected to the sink by an edge of capacity 
$n_j-|C_j|$. 
Further, there is an edge of capacity $1$ from $x_i$ to $y_j$, $j\in [k]$, 
if and only if $\frac{n_j-|C_j|}{n_j} \succeq'_i \frac{n_s-|C_s|+1}{n_s+1}$ 
for each $s\in [k]\setminus\{j\}$ and, moreover, $\frac{n_j-|C_j|}{n_j}\succeq'_i 1$.
Finally, there is an edge of capacity $1$ from $x_i$ to $y_0$ if 
$1\succeq'_i \frac{n_s-|C_s|+1}{n_s+1}$ for all $s\in [k]$.

Intuitively, an edge from $x_i$ to $y_j$, $j\in [k]$, indicates that $i$ weakly prefers
being in a coalition of size $n_j$ with $|C_j|$ blue agents to deviating to 
a coalition of size $n_s$ with $|C_s|$ blue agents, for $s\neq j$, 
or to a homogeneous red coalition; an edge from $x_i$ to $y_0$ indicates that $i$
weakly prefers a homogeneous red coalition to other available options.
By construction, there is a flow of size $|R|$
in this network if and only if there is a partition 
$\{A_1, \dots, A_k, D_0\}$ of $R\cup B$ such that
$C_i\subseteq A_i$, $|A_i|=n_i$ for $i\in [k]$ and no red
agent has an NS-deviation from this partition. 
Thus, if this instance of network flow does not admit a flow of size $|R|$, 
we reject the current guess, and otherwise we proceed to the next stage.

In the third stage, we split $D_0$ into $D_1, \dots, D_\ell$ for some $\ell\ge 0$. 
Note that, no matter how we do this, if no red agent had an NS-deviation in
$\{A_1, \dots, A_k, D_0\}$, this will also be the case for the new partition. Thus, 
we can focus on the blue agents. 
Given a $t\in [n_0]$, we say that $t$ is {\em safe
for an agent $j\in B$} if $j$ weakly prefers her current coalition in $\{A_1, \dots, A_k, D_0\}$
to a coalition consisting of herself and $t$ red agents;
we say that $t$ is {\em safe} if it is safe for each $j\in B$. Let $T\subseteq [n_0]$
be a collection of all safe integers. Then, we can subdivide $D_0$ so that in the resulting
partition no blue agent has an NS-deviation if and only if $n_0$ can be represented 
as a sum of integers from $T$; the latter problem is a variant
of {\sc Knapsack}, and can be solved by dynamic programming in time $O(n^2)$.  
\end{proof}
Theorem~\ref{thm:ns-xp} puts the problem of finding a Nash stable outcome
in the complexity class XP with respect to the parameter $p$; however, we
do not know if this problem is fixed-parameter tractable (FPT) with respect to this parameter.

For HDGs with dichotomous preferences,
another natural parameter is the number of ratios approved
by each agent. However, our problem turns out to be 
para-NP-hard with respect to this parameter:
the hardness proof in Theorem~\ref{thm:ns}
goes through even if each agent only approves at most four ratios in $\Theta$. 
Similarly, \citet{peters2016dichotomous} shows that finding a Nash stable outcome 
in dichotomous anonymous games remains NP-hard if each agent 
approves at most four coalition sizes. Interestingly, we can prove that
the latter problem becomes polynomial-time solvable if each agent
approves at most one coalition size, but it is not clear how to extend
this proof to dichotomous HDGs.

\begin{theorem}\label{thm:anon}
Given a dichotomous anonymous game $G = (N, (\succeq_i)_{i\in N})$ 
where for each agent $i\in N$ the set $\calN^+(i)$ is of the form
$\{C\in\calN(i): |C|=s_i\}$ for some $s_i\in\mathbb{N}$,
we can decide in polynomial time whether $G$ admits a Nash stable outcome.
\end{theorem}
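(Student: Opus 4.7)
The plan is to exploit the restrictive preference structure --- each agent $i$ approves exactly one size $s_i$ --- to reformulate Nash stability as a clean combinatorial condition that can be checked in polynomial time. Throughout, for an outcome $\pi$ let $k_t$ denote the number of size-$t$ coalitions in $\pi$, and call agent $i$ \emph{happy} in $\pi$ if $|\pi_i|=s_i$.

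First, I would extract necessary structural conditions on NS outcomes. A dichotomous agent $i$ has an NS-deviation from $\pi$ exactly when some coalition $C\in\pi\cup\{\varnothing\}$ with $C\neq\pi_i$ satisfies $|C|=s_i-1$, since only then is $C\cup\{i\}$ approved while $\pi_i$ is not. From this I would derive: (a) every $i$ with $s_i=1$ must be a singleton in any NS outcome, because otherwise she strictly prefers $\{i\}$ obtained by deviating to $\varnothing$; and (b) for every unhappy $i$ with $s_i\geq 2$, either $k_{s_i-1}=0$, or $k_{s_i-1}=1$ with $\pi_i$ the unique size-$(s_i-1)$ coalition. In particular, the \emph{status} $k_t\in\{0,1,\geq 2\}$ of each size $t$ alone determines the placements available to type-$(t{+}1)$ agents: status $\geq 2$ forces all of them to be happy, status $1$ forces the unhappy ones into the unique size-$t$ coalition, and status $0$ leaves them unconstrained by this size.

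Second, I would reduce NS existence to a search over status vectors. The constraints above are local --- the status of $k_t$ only affects type-$(t{+}1)$ agents --- so processing sizes in increasing order through a dynamic programme is natural. The DP state at size $t$ would record the status of $k_{t-1}$ (precisely what constrains type-$t$ placements) together with compact counters for how many agents of each still-live type remain to be placed. A transition at size $t$ guesses the status of $k_t$ and the multiplicities of each type sent into size-$t$ coalitions; the inner check that the chosen multiplicities actually pack into $k_t$ coalitions of size $t$ is a small bipartite matching.

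The main obstacle I expect is keeping the DP state polynomial. The naive state --- a full vector of statuses of all earlier sizes, together with remaining-type counters --- is exponential, and the global equation $\sum_t t\,k_t=n$ couples decisions at distant sizes. Reducing the state by exploiting that any NS outcome uses at most $O(\sqrt{n})$ distinct sizes, and that only sizes $t$ with $t+1\in\{s_i:i\in N\}$ have an NS-relevant status, is where the bulk of the technical work lies. This locality is also what one would expect to break down for dichotomous HDGs, in which a single approved ratio is realised by coalitions of many different sizes, explaining why the proof technique does not obviously extend.
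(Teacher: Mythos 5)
Your first step is correct, and it is in fact sharper than the corresponding step in the paper's own proof. The paper takes a much more direct route with no dynamic programme at all: if no agent approves size $1$ the grand coalition is Nash stable; otherwise it defines $\ell$ as the largest index such that $N_1,\dots,N_\ell$ are all nonempty, argues by induction that in every NS outcome every agent of type $j\le\ell$ must sit in a coalition of size exactly $j$, and then reduces existence to the single counting test $d\le |N'|$, where $d=\sum_{j\le \ell} d_j$ counts the filler agents needed to round each $N_j$ up to a multiple of $j$ and $N'$ collects the agents of type $0$ or type $>\ell$. Your case (b) --- an unhappy agent $i$ whose own coalition is the \emph{unique} coalition of size $s_i-1$ --- is precisely the situation that this induction silently excludes: with three agents having $s_1=1$, $s_2=2$, $s_3=3$, the partition $\{\{1\},\{2,3\}\}$ is Nash stable even though agent $3$ is not in a coalition of size $3$ (the only pair she could join is the one she already occupies), while the paper's test computes $d=3>0=|N'|$ and would answer ``no.'' So your characterization of NS-deviations is the right starting point, and it catches a genuine subtlety that the published argument does not.

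The problem is that the second half of your proposal is not a proof. All of the algorithmic content is deferred: you correctly diagnose that the naive state --- statuses of all earlier sizes together with a remaining-count for every type --- is exponential, but you then only name the ingredients you hope will repair this and declare that making the state polynomial ``is where the bulk of the technical work lies.'' The $O(\sqrt{n})$ bound on the number of distinct coalition sizes in any partition does not bound the number of \emph{types} whose remaining counts the state must carry (there can be $\Theta(n)$ distinct values $s_i$, each still live until its size is processed, giving $n^{\Theta(n)}$ states), and it does not resolve the global coupling through $\sum_t t\,k_t=n$; the observation that only sizes $t$ with $t+1\in\{s_i\}$ have NS-relevant status does not help with either issue. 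As written there is therefore no polynomial-time algorithm and no proof of the theorem. To close the gap you would need either to actually exhibit a polynomially bounded DP state (which is not obvious from what you have), or to follow the paper's strategy of showing that the placements of low types are essentially forced and reducing to a closed-form feasibility condition --- but corrected to account for the exceptional case (b) that you identified, e.g.\ by tracking which types may legitimately ``park'' their unhappy agents inside a unique coalition of size $s_i-1$.
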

\begin{proof}
If there does not exist an agent $i\in N$ with $s_i=1$, 
it follows that the grand coalition is Nash stable. Thus, in the following, 
we assume that there exists at least one agent approving a singleton coalition. 
For each $j=0, 1, \dots, n$, let $N_j$ be the set of all agents $i \in N$ with $s_i=j$.
Moreover, let $\ell=\max\{i \mid N_j\neq\varnothing\text{ for all }j\in [i]\}$. 

We claim that for each $j\in [\ell]$ all agents in $N_j$ need to be in coalitions of size 
$j$ in every Nash stable outcome. This follows easily by induction on $j$. Indeed, if an 
agent in $N_1$ is not in a singleton coalition, she can always deviate to being in a singleton 
coalition. Further, if $1<j\le \ell$ and for each $k\in [j-1]$ all agents in $N_k$ are in 
coalitions of size $k$, it follows by definition of $\ell$ that there needs to exist a 
coalition of size $k$ for all $k\in [j-1]$. Now, assuming that an agent in $N_j$ is not in a 
coalition of size $j$, she can always deviate to a coalition of size $j-1$.

For each $j\in [\ell]$, let
$d_j = j\times \left\lceil\frac{|N_j|}{j}\right\rceil - |N_j|$. 
Intuitively, agents in $N_j$ need $d_j$ additional agents to split into coalitions of size $j$.
Let $d=\sum_{j=1}^\ell d_j$ and set $N' = N_0\cup \bigcup_{j>\ell}N_j$.

If $d\le |N'|$, we can construct a Nash stable outcome as follows: we put the agents from 
$N'$ into $\ell$ pairwise disjoint sets $D_1, \dots, D_\ell \subseteq N'$ so that $|D_j|=d_j$ 
for each $j\in [\ell]$. Then, by construction, $|N_j\cup D_j|$ is divisible by $j$ for each 
$j\in [\ell]$. Hence, we can split agents in $N_j\cup D_j$ into coalitions of size $j$. The 
remaining agents in $N'$ are placed in singleton coalitions. This partition is Nash stable, 
since all agents in $N\setminus N'$ approve their coalition sizes and none of the agents in 
$N'$ approves coalition sizes $1, \dots, \ell,$ or $\ell+1$.

On the other hand, if $d>|N'|$, it follows that there cannot exist a Nash stable outcome: indeed, in this case, there are not enough agents in $N'$ to construct an outcome in which each agent in $N\setminus N'$ is in a coalition of her approved size, and we have argued that this is necessary for stability.
\end{proof}

%%%%%%%%%%%%%%%%%%%%%%%%%%%%%%%%%%%%%%%%%%%%%%%%%%%%%%%%%%%%%%%%%%%%%%%%%%%%%%%%%
\section{Individual Stability}\label{sec:is}
\citet{BEI19} describe an algorithm that, given an HDG with single-peaked preferences,
outputs an individually stable outcome in polynomial time. This algorithm is fairly complex:
its description and analysis take up almost four pages of the AAMAS'19 paper. 
It is similar in spirit to the algorithm of \citet{Bogomolnaia2002} that finds
an IS outcome of an anonymous game with single-peaked preferences in polynomial time. The main
contribution of this section is a much simpler polynomial-time algorithm that can find
an individually stable outcome of {\em any} HDG; this result is particularly surprising, 
because it is known that not every anonymous game admits 
an IS outcome \citep{Bogomolnaia2002}.

\begin{algorithm}[t] 
	\caption{Computing an individually stable outcome}
	\label{alg:IS}
	\begin{algorithmic}[1]
		\Require{HDG $G=(R\cup B,(\succsim'_{i})_{i\in R\cup B})$} 
		\Ensure{Individually stable outcome $\pi$ of game $G$}
		%\State {Let $B^{*}=\{b^{*}_{1},\dots,b^{*}_{k}\}$ 
                %   be the set of all blue agents $b$ with 
                %   $\frac{1}{2}\succsim'_{b}0$;}
		%\State{Let $R^{*}=\{r^{*}_{1},\dots, r^{*}_{\ell}\}$ 
                %   be the set of all red agents $r$ with 
                %   $\frac{1}{2}\succsim'_{r}1$;}
		\State {Let $B^{*}=\{b^*_1, \dots, b^*_k\} = \{b\in B: \frac{1}{2}\succsim'_b 0\}$;}
		\State {Let $R^{*}=\{r^*_1, \dots, r^*_\ell\} = \{r\in R: \frac{1}{2}\succsim'_r 1\}$;}
		\State{Let $C_0=\{r^{*}_{1},\dots,r^{*}_{\min \{k,\ell\}}\}\cup 
                              \{b^{*}_{1},\dots,b^{*}_{\min \{k,\ell\}}\}$;}
		\State{Let $C=C_0$;}
		\Repeat
		\For{$i\in N\setminus C$ }                    
		\If{$i$ has an IS-deviation from $\{i\}$ to $C$}
		\State $C=C\cup\{i\}$;
		\EndIf
		\EndFor
		\Until{$C$ has not changed in the previous iteration}
		\State{Let $C_1=C$;}
		\State \Return {$\pi=\{\{i\}\mid i\in N\setminus C_1 \}\cup \{C_1\}$;}
	\end{algorithmic}
\end{algorithm}

\begin{theorem}\label{thm:is-easy}
Given an HDG $G = (R\cup B, (\succeq'_i)_{i\in R\cup B})$, we can compute an individually 
stable outcome of $G$ in polynomial time.
\end{theorem}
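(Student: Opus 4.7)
The plan is to prove that Algorithm~\ref{alg:IS} terminates in polynomial time and that its output $\pi = \{C_1\} \cup \{\{i\} : i \notin C_1\}$ is individually stable. Polynomial time is immediate: each iteration of the outer Repeat loop either extends $C$ by at least one agent or terminates, so the loop runs $O(n)$ times, and each pass does $O(n^2)$ preference comparisons.

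The substance of the proof rests on two invariants about $C_1$. \emph{Monotonicity}: whenever the inner test admits an agent $j$, every current member of $C$ weakly prefers the enlarged coalition, so transitively $C_1 \succeq_a C^{(t_a)}$ for every $a \in C_1$, where $C^{(t_a)}$ denotes the value of $C$ immediately after $a$ joined. \emph{Individual rationality}: if $a \in C_0$ then $a \in B^* \cup R^*$ and $\theta(C_0)=1/2$ together give $C_0 \succeq_a \{a\}$; if $a \in C_1 \setminus C_0$ then the admission check guarantees $C^{(t_a)} \succ_a \{a\}$. Chained with monotonicity, $C_1 \succeq_a \{a\}$ for every $a \in C_1$.

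I then rule out IS-deviations in $\pi$ by a case analysis on the target coalition $C' \in \pi \cup \{\varnothing\}$. Deviations with $C' = C_1$ are killed by the loop's termination, and deviations with $C' = \varnothing$ are killed by individual rationality. For $C' = \{j\}$ with $j \notin C_1$ and $i, j$ of the same colour the merged coalition has ratio $\theta(\{i\})$, so no strict improvement is possible (directly when $i \notin C_1$, and via individual rationality when $i \in C_1$). An opposite-colour merger of two singletons outside $C_1$ would force both $i \in B^* \setminus C_0$ and $j \in R^* \setminus C_0$, requiring $k > \min\{k,\ell\}$ and $\ell > \min\{k,\ell\}$ simultaneously, which is impossible.

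The main obstacle, which I expect to be the most delicate step, is a deviation by some $i \in C_1$ (say blue) into a singleton $\{j\}$ with $j$ red and $j \notin C_1$: the merged pair has ratio $1/2$, so the deviation needs $1/2 \succ'_i \theta(C_1)$ and $j \in R^*$. Since $j \in R^* \setminus C_0$, we have $\ell > k = \min\{k,\ell\}$, whence $B^* \subseteq C_0 \subseteq C_1$. If $i \in C_0$, monotonicity yields $\theta(C_1) \succeq'_i 1/2$ and kills the strict preference; if $i \in C_1 \setminus C_0$, then $i$ is blue but not in $B^*$, so $0 \succ'_i 1/2$, and the chain $\theta(C_1) \succeq'_i \theta(C^{(t_i)}) \succ'_i 0 \succ'_i 1/2$ (from monotonicity and the admission strictness) gives $\theta(C_1) \succ'_i 1/2$, contradicting $1/2 \succ'_i \theta(C_1)$. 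The symmetric argument handles the case where $i$ is red.
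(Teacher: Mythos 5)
Your proof is correct and follows essentially the same route as the paper's: it analyses Algorithm~\ref{alg:IS} via the monotonicity of members' preferences as $C$ grows, individual rationality of $C_1$, and the observation that a consenting opposite-colour singleton would have to lie in $B^*\setminus C_0$ or $R^*\setminus C_0$, which the construction of $C_0$ prevents from happening on both sides simultaneously. The only cosmetic difference is that you organise the case analysis by target coalition and derive the containment $B^*\subseteq C_0$ (or $R^*\subseteq C_0$) on the fly, whereas the paper assumes $|B^*|\ge|R^*|$ without loss of generality and splits by deviator class ($C_0$, $C_1\setminus C_0$, $N\setminus C_1$).
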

\begin{proof}
In what follows, we say that a coalition $C\subseteq R\cup B$
is {\em balanced} if $|C\cap R|=|C\cap B|$.

We claim that Algorithm~\ref{alg:IS} outputs an IS outcome of $G$.
The first phase of this algorithm creates a maximum-size balanced coalition 
$C$ such that all agents
in $C$ prefer $C$ to being alone; all other agents
are placed in singleton coalitions.
In the second phase, the algorithm checks if any of the remaining agents 
has an IS-deviation to $C$; if yes, some such agent is invited to join $C$.
This step is repeated until none of the remaining agents has an IS-deviation to $C$.
To avoid ambiguity, we use $C_0$ and $C_1$ to denote the `large' coalition obtained
at the end of the first phase and at the end of the second phase, respectively.

Consider the sets $B^*$ and $R^*$ defined by our algorithm, and assume without
loss of generality that $|B^*|\ge |R^*|$. 
By construction, $C_0$ contains all red agents who weakly prefer being 
in a balanced coalition to being in a homogeneous coalition. 
In particular, this means that no red agent in $N\setminus C_1$ 
would allow a blue agent to join her singleton coalition.

%The partition output by the algorithm consists of $C_1$ and singleton coalitions.
To prove that the partition computed by our algorithm 
is individually stable, we consider three classes of agents:

\begin{description}
\item[Agents in $\boldsymbol{C_0}$:] 
By deviating, these agents can form a homogeneous coalition or a balanced coalition. 
They weakly prefer $C_0$ to a homogeneous coalition, 
and, since they approve all subsequent changes to $C$, they weakly prefer $C_1$ to $C_0$
(which is balanced). Thus, they have no IS-deviation.
\item[Agents in $\boldsymbol{N\setminus C_1}$:]
By construction, these agents do not have an IS-deviation to $C_1$, and they are indifferent
between being alone and joining another agent of the same color. 
Further, a deviation that results in a two-agent balanced coalition is not an 
IS-deviation: all red agents in $N\setminus C_1$ strictly prefer being alone 
to being in a balanced coalition. Thus, agents in $N\setminus C_1$ have no IS-deviation.
\item[Agents in $\boldsymbol{C_1\setminus C_0}$:]
Joining $C$, these agents strictly preferred $C$ to being in a homogeneous coalition, 
and they have approved all changes to $C$ since then. 
Thus, they strictly prefer being in $C_1$ to being in a homogeneous
coalition. Further, a blue agent in $C_1\setminus C_0$ cannot join
a singleton red coalition, since the red agent strictly prefers to be left alone. 
On the other hand, every red agent in $C_1\setminus C_0$ strictly prefers
being alone to being in a balanced coalition, so by transitivity 
she strictly prefers staying in $C_1$ to joining a singleton blue coalition. 
\end{description}
\vspace*{-0.8cm}
\end{proof} 

The following example illustrates the execution of our algorithm.

\begin{example}\label{ex:is-good}
Consider an HDG with $B=\{1, 2, 3\}$, $R=\{4, 5\}$.
The agents' preferences over $\Theta$ are given by 
\begin{align*}
\frac{2}{3}\sim'_1\frac{1}{2}\succ'_1 0, \qquad 
\frac{2}{3}\succ'_2\frac{1}{2}\succ'_2 0,\\
\frac{1}{4}\succ'_3 0, \qquad
\frac{2}{3}\succ'_4\frac{1}{2}\succ'_4 1, \qquad
\frac{2}{3}\succ'_5 1.
\end{align*}
The algorithm sets $B^*=\{1, 2\}$, $R^*=\{4\}$, and $C_0=\{1, 4\}$.
Then, in the second phase agent $5$ has an IS-deviation to $C_0$:
we have $\{1, 4, 5\}\succ_5 \{5\}$, 
$\{1, 4, 5\}\succ_4 \{1, 4\}$
and $\{1, 4, 5\}\sim_1 \{1, 4\}$.
Since agents $2$ and $3$ do not have an IS-deviation to $\{1, 4, 5\}$, 
the algorithm stops and outputs $(\{1, 4, 5\}, \{2\}, \{3\})$.
\end{example}
In Example~\ref{ex:is-good}, our algorithm outputs a Pareto optimal outcome;
however, our next example shows that this is not always the case.

\begin{example}\label{ex:is-bad}
Consider an HDG with $B=\{1, 2, 3, 4\}$, $R=\{5\}$, where
$\frac{1}{5}\succ'_i 0$ for each $i\in B$, and 
the red agent's most preferred ratio is $\frac{1}{5}$.
Then, for every agent the formation of the grand coalition is the most 
preferred outcome. However, as $B^*=\varnothing$, the algorithm sets
$C_0=\varnothing$ and hence outputs the partition consisting of five singletons.
\end{example}

%Algorithm~\ref{alg:IS} always outputs a partition that consists
%of at most one mixed coalition and some singleton coalitions. In contrast, 
%the output of the algorithm on Bredereck et al.~has a very different structure.
%This algorithm first creates minimally mixed coalitions (i.e., coalitions with 
%several red agents and a single blue agent as well as coalitions with a single 
%red agent and several blue agents). It then creates balanced coalitions, and finally 
%allows agents to perform IS-deviations. 
Interestingly, the algorithm of Bredereck et al.~would output the grand coalition 
on the instance from Example~\ref{ex:is-bad}.
However, it is not the case that on any single-peaked instance the output 
of Bredereck et al.'s algorithm Pareto-dominates the output of Algorithm~\ref{alg:IS}; 
there exists an example where the converse is true. This means, in particular, that neither
of these algorithms is guaranteed to output a Pareto-optimal outcome on single-peaked
instances; this is in contrast with the algorithm of \citet{Bogomolnaia2002}, 
which, given a single-peaked anonymous game, always outputs an IS outcome that
is also Pareto optimal.

\paragraph{Better-Response Dynamics}
We note that one can also attempt to reach an IS outcome by a sequence of IS deviations:
starting from an arbitrary partition, we check if the current partition
is individually stable, and if not, we pick an agent who has an IS-deviation 
and allow her to perform it. 
We will refer to this procedure as the {\em better-response dynamics (IS-BRD)}.
While IS-BRD is a very general algorithm that can be used for arbitrary hedonic games, 
it may fail to converge even if an IS outcome exists, and it may need a 
superpolynomial number of iterations to converge \citep{Gairing2010}; 
however, no results concerning its convergence are known
for hedonic diversity games or for anonymous games. 

%In Section~\ref{sec:emp} we empirically compare partitions computed by 
%(a) Algorithm~\ref{alg:IS}, (b) the algorithm of Bredereck et al., and (c)
%IS-BRD, with regards to several measures of partition quality;
%%interestingly, in our experiments every sequence of IS deviations in HDGs 
%converges after $n^2$ steps. 

The following proposition, which applies to arbitrary dichotomous hedonic games, 
makes partial progress towards the understanding of the performance of IS-BRD: 
it shows that for such games IS-BRD always converges as long as 
in the initial partition all agents belong to the grand coalition.
We note that existence of IS outcomes in dichotomous games has been established by 
\citet{peters2016dichotomous}; his proof provides a polynomial-time algorithm
for finding an IS outcome, but this algorithm differs from IS-BRD.

\begin{proposition}
For every dichotomous hedonic game with a set of agents $N$,
any sequence of IS deviations starting
from the grand coalition converges to an IS outcome after at most $|N|$ iterations.  
\end{proposition}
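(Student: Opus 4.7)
The plan is to exhibit a monotone quantity that decreases with every IS-deviation: the size of the ``residual'' of the grand coalition, $M_t := N \setminus \{i_1,\ldots,i_t\}$, where $i_1, i_2, \ldots$ is the sequence of deviators and $\pi^0 = \{N\}, \pi^1, \pi^2, \ldots$ are the resulting partitions. I would argue that $|M_t|$ strictly decreases at every step, which, since $|M_0|=|N|$, bounds the total number of iterations by $|N|$.

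The heart of the argument is the following invariant, which I would prove by induction on $t$: \emph{$(\star)$: whenever $M_t$ is nonempty, $M_t$ is a coalition of $\pi^t$, and every other coalition $C \in \pi^t$ satisfies $C \in \calN^+(j)$ for each $j \in C$.} The base case $t = 0$ is immediate since $\pi^0 = \{N\} = \{M_0\}$ has no other coalitions. For the inductive step, I would first observe that the next deviator $i = i_{t+1}$ must come from $M_t$: otherwise $i$ sits in some non-main coalition $C$, which by $(\star)$ lies in $\calN^+(i)$, but the strict preference required for any IS-deviation forces her current coalition into $\calN^-(i)$ (by the dichotomous characterisation of $\succ_i$), a contradiction. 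Hence $\pi^t_i = M_t$ and $M_{t+1} = M_t \setminus \{i\}$ is one element smaller. The only new non-main coalition in $\pi^{t+1}$ is the target $C \cup \{i\}$ for some $C \in (\pi^t \cup \{\emptyset\}) \setminus \{M_t\}$: the strict preference gives $C \cup \{i\} \in \calN^+(i)$, and for each $j \in C$ the inductive hypothesis gives $C \in \calN^+(j)$, which, together with the IS-consent condition $C \cup \{i\} \succeq_j C$, forces $C \cup \{i\} \in \calN^+(j)$. All other non-main coalitions are unchanged, so $(\star)$ persists.

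Once $(\star)$ is in place the proposition falls out: $|M_t| = |N|-t$, so after at most $|N|$ iterations either $M_t$ has emptied (in which case $(\star)$ says every remaining coalition is universally approved, no agent can strictly prefer any alternative, and the partition is IS) or the sequence has already halted. The step I expect to be trickiest is verifying $(\star)$ for the enlarged target $C \cup \{i\}$ in the case $C \neq \emptyset$: in a dichotomous game there is no a priori link between approval of $C$ and approval of $C \cup \{i\}$, so the argument must combine the IS-consent condition with the inductive hypothesis that $C$ itself was already universally approved. This is precisely why starting from the grand coalition is essential: every non-main coalition is born universally approved, and since deviators always originate from the main coalition, agents who have already settled into non-main coalitions are never ``abandoned'' by a departing neighbour.
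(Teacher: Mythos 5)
Your proof is correct and follows essentially the same route as the paper's: your residual set $M_t$ is exactly the paper's set $N^{(k)}$ of agents who have not yet deviated, your invariant $(\star)$ is the paper's inductive claim that the non-deviators form one coalition while every other coalition is approved by all its members, and both arguments use the IS-consent condition together with the dichotomous structure to show that approval of a coalition persists when a deviator joins it. Since each agent deviates at most once, the bound of $|N|$ iterations follows in both versions.
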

\begin{proof}
If the grand coalition is individually stable, we are done. Otherwise, 
let $\pi^{(k)}$ be the partition that forms after a sequence of $k$ deviations, $k\ge 1$, 
and let $N^{(k)}$ be the set of all agents that have not deviated during this process.

We claim that the agents in $N^{(k)}$ form a coalition in $\pi^{(k)}$, and
all agents in $N\setminus N^{(k)}$ approve their coalitions in $\pi^{(k)}$ 
(and hence only agents in $N^{(k)}$ can perform IS deviations).

This follows by induction on $k$. Indeed, if $k=1$, 
then the first agent who deviates forms a singleton coalition she approves,
and all remaining agents stay together in one coalition. Now, suppose our statement is true for 
$k$; we will argue that it is true for $k+1$. Consider the agent $j$ 
who performs the $(k+1)$-st deviation. By the inductive hypothesis, we have 
$j\in N^{(k)}$, so $N^{(k+1)}=N^{(k)}\setminus\{j\}$, i.e., agents in $N^{(k+1)}$
form a coalition in $\pi^{(k+1)}$. Now, consider a coalition $C$ in $\pi^{(k)}$ formed 
by agents in $N\setminus N^{(k)}$. By the inductive hypothesis, all agents in $C$
approve $C$. If $C$ is unaffected by the deviation, this remains to be the case.
On the other hand, if $j$ joins $C$, $j$ approves $C\cup\{j\}$; since all agents in $C$ do not object to $j$
joining them, it follows that they, too, approve $C\cup\{j\}$. This establishes
our claim.

It follows that every agent can deviate at most once: after her first deviation,
she approves the coalition she is in and does not want to move again. Hence, 
there can be at most $|N|$ deviations.
\end{proof}

We will revisit IS-BRD in Section~\ref{sec:emp}, where we compare
different approaches to finding IS outcomes in HDGs.

%%%%%%%%%%%%%%%%%%%%%%%%%%%%%%%%%%%%%%%%%%%%%%%%%%%%%%%%%%%%%%%%%%%%%%%%%%%

\section{Diversity Games With $\boldsymbol{k}$ Classes}\label{sec:ktypes}
\citet{BEI19} define hedonic diversity games for two agent classes. However, 
diversity-related considerations remain relevant in the presence
of three or more classes: for instance, in the example discussed 
in the beginning of the paper, the visiting students may come from several 
different countries. To capture such settings, we need to reason about games 
with $k$ agent classes for $k>2$:
e.g., for $k=3$ we may have red, blue, and green agents. 
%with red agent's
%preferences over coalitions depending on the proportion of red, blue and green agents
%in each coalition.
A direct generalization of the model of Bredereck et al.~is
to allow a red agent to base her preferences on both the ratio of red and blue agents 
and the ratio of red and green agents; a more restrictive approach 
is to assume that a red agent only cares about the fraction of red agents 
in her coalition. We will now explore both of these approaches; as we feel that the latter
approach is closer in spirit to the original HDG model, we reserve the term {\em $k$-HDG}
to refer to the more restrictive model and refer to games where agents can have
arbitrary preferences over ratios as {\em $k$-tuple HDGs}. 

\begin{definition}\label{def:k-first}
A {\em $k$-tuple hedonic diversity game} is a hedonic game 
$(N, (\succeq_i)_{i\in N})$ where $N$ can be partitioned into $k$
pairwise disjoint sets $R_1, \dots, R_k$ so that for each $j\in [k]$, 
each agent $i\in R_j$,
and every pair of coalitions $C, D\in\calN(i)$ 
with $\frac{|C\cap R_s|}{|C\cap R_j|}=\frac{|D\cap R_s|}{|D\cap R_j|}$
for each $s\in [k]$ we have $C\sim_i D$.
\end{definition}
Given an $n$-agent $k$-tuple HDG and an agent $i\in R_j$, 
we can map a coalition $C\in \calN(i)$
to a $k$-tuple of fractions $\left(\frac{|C\cap R_\ell|}{|C\cap R_j|}\right)_{\ell\in [k]}$:
$i$ is indifferent between two coalitions that map to the same tuple. 
Hence, $i$'s preferences can be described by a partial order over such tuples.
As the number of tuples of this form is bounded by $n^{2k}$,  
if $k$ is bounded by a constant, the size of this representation
is polynomial in $n$. On the other hand, 
every hedonic game with $n$ agents is an $n$-tuple HDG,
as we can simply place each agent in a separate class. 

Our XP result for Nash stability extends to this more general setting:
if the total size of the smallest $k-1$ classes can be bounded by a constant, 
a Nash stable outcome can be found in time polynomial in the input representation size,
which, in turn, can be bounded as $\mathcal{O}(n^3)$ in this case.
The proof (omitted) is
a %straightforward 
simple generalization of 
the proof of Theorem~\ref{thm:ns}. 

\begin{theorem}\label{thm:ns-k}
Given a $k$-tuple HDG $G = (N, (\succeq_i)_{i\in N})$
with classes $R_1, \dots, R_k$ where $|N|=n$ and $\min_{j\in [k]}|N\setminus R_j| = p$,
we can decide
whether $G$ has a Nash stable outcome in time $(np)^{p}\cdot\mathrm{poly}(n)$.
\end{theorem}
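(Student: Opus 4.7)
The plan is to generalize the three-stage algorithm of Theorem~\ref{thm:ns-xp}. Let $j^*\in [k]$ attain the minimum in $\min_j |N\setminus R_j|=p$, set $S=N\setminus R_{j^*}$, and treat $R_{j^*}$ as the large class playing the role of $R$. In Stage~1 I would guess a partition $\{C_1,\dots,C_m\}$ of $S$ together with integers $n_1,\dots,n_m$ with $n_i\ge |C_i|$ and $\sum_{i=1}^m n_i\le n$, setting $n_0=n-\sum_i n_i$; the target partition has the form $\{A_1,\dots,A_m,D_1,\dots,D_t\}$ with $C_i\subseteq A_i$, $|A_i|=n_i$, and each $D_\ell\subseteq R_{j^*}$. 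There are at most $p^p$ partitions of $S$ and at most $n^p$ admissible size vectors, so $(np)^p$ guesses in total. Note that, for a fixed guess, the full class-profile $(|A_i\cap R_1|,\dots,|A_i\cap R_k|)$ of each $A_i$ is determined: the $j^*$-coordinate equals $n_i-|C_i|$, and the $\ell$-coordinate equals $|C_i\cap R_\ell|$ for $\ell\ne j^*$. Hence every agent can evaluate every single-agent deviation against her preference over $k$-tuples.

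Stage~2 would use network flow to assign the $|R_{j^*}|$ agents either to the slots of the $A_i$'s or to a homogeneous pool. I would build the network exactly as in Theorem~\ref{thm:ns-xp}: source, sink, a node $x_a$ for each $a\in R_{j^*}$, and nodes $y_0,y_1,\dots,y_m$, with edges source$\to x_a$ of capacity $1$, edges $y_i\to\text{sink}$ of capacity $n_i-|C_i|$ for $i\in [m]$, and $y_0\to\text{sink}$ of capacity $n_0$. An edge $x_a\to y_i$, $i\in [m]$, exists iff $a$ weakly prefers the tuple of $A_i$ both to the tuple induced by $A_s\cup\{a\}$ for every $s\ne i$ and to the tuple of a homogeneous $R_{j^*}$ coalition (which is $1$ at position $j^*$, $0$ elsewhere); the edge $x_a\to y_0$ encodes the complementary condition. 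A flow of value $|R_{j^*}|$ then certifies the existence of an assignment such that no $R_{j^*}$ agent has an NS-deviation from the partition $\{A_1,\dots,A_m,D_0\}$, with $D_0$ the not-yet-split pool of $n_0$ red agents.

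Stage~3 would split $D_0$ into homogeneous $R_{j^*}$ coalitions $D_1,\dots,D_t$ so that no agent of $S$ wants to deviate to any of them. For $s\in S\cap R_\ell$ currently in $A_i$, I would call $t\in [n_0]$ \emph{safe for $s$} if $s$ weakly prefers the tuple of $A_i$ to the $k$-tuple with value $t$ at position $j^*$, value $1$ at position $\ell$, and $0$ elsewhere (the tuple induced by $s$ joining a size-$t$ homogeneous $R_{j^*}$ coalition), and \emph{safe} if it is safe for every $s\in S$. Because splitting $D_0$ affects neither the class-profiles of the $A_i$'s nor the $R_{j^*}$ agents (who are indifferent between all homogeneous $R_{j^*}$ coalitions, and whose deviations into the $A_i$'s were already ruled out in Stage~2), the guess yields an NS outcome iff $n_0$ is a sum of safe integers — decided by the same $O(n^2)$ knapsack DP as before. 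Multiplying by the number of guesses gives the claimed $(np)^p\cdot\mathrm{poly}(n)$ bound.

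The main obstacle is simply the bookkeeping required to translate the $k$-dimensional tuple representation into the edge-existence tests of Stage~2 and the safety tests of Stage~3: unlike the two-class case, the tuple induced by a deviation depends on the home class $R_\ell$ of the deviator, so each test must be parametrized by $\ell$. Once this is handled, completeness (every NS outcome fits some guess, because each $A_i$ meeting $S$ uniquely specifies a $C_i$ and an $n_i$) and soundness (no agent retains an NS-deviation, by the combined Stage~2 and Stage~3 conditions) follow clause by clause as in the proof of Theorem~\ref{thm:ns-xp}.
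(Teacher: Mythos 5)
Your proposal is correct and matches the paper's (omitted) proof, which is stated to be exactly this: a direct generalization of the three-stage guess/flow/knapsack algorithm of Theorem~\ref{thm:ns-xp}, with $R_{j^*}$ playing the role of $R$ and the key observation that a guess fixes the full class-profile of every $A_i$, so all deviation tests reduce to comparisons of determined $k$-tuples. The one point worth double-checking (inherited from the two-class proof rather than introduced by you) is that deviations of $S$-agents to other $A_s$'s or to singleton coalitions are also determined by the guess alone and should be used to filter guesses before Stages~2 and~3.
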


In contrast, the following example shows that Theorem~\ref{thm:is-easy} 
does not extend to $k>2$, i.e., $k$-tuple HDGs may fail to have an individually stable outcome 
if $k>2$. 

\begin{example}\label{ex:kHDG}
Consider a $3$-tuple HDG $G$ with $N=R_1\cup R_2\cup R_3$, $R_1=\{r\}$, $R_2=\{b\}$, 
$R_3=\{g\}$, where the agents have the following preferences:
\begin{align*}
\{r, b\} &\succ_r \{r, g\}\succ_r \{r\} \succ_r \{r, b, g\};\\
\{b, g\} &\succ_b \{b, r\}\succ_b \{b\} \succ_b \{r, b, g\};\\
\{g, r\} &\succ_g \{g, b\}\succ_g \{g\} \succ_g \{r, b, g\};
\end{align*}
It is immediate that this game has no IS outcome. Indeed, every agent
has an IS-deviation from the grand coalition, and if each agent is in a singleton 
coalition, $r$ has an IS-deviation to $\{b\}$. Further, if an outcome consists of a coalition of size
$2$ and a singleton coalition, for one of the agents in the coalition of size $2$
joining the singleton agent is an IS-deviation.
To extend this example to $k>3$ classes, we can simply add agents
who prefer to be in homogeneous coalitions.
\end{example}
Moreover, for every fixed $k\ge 5$, deciding the existence of an IS outcome 
in a $k$-tuple HDG is NP-complete.

\begin{theorem}\label{thm:is-k}
Given a $k$-tuple HDG $G = (N, (\succeq_i)_{i\in N})$
with $k\ge 5$, it is NP-complete to decide
whether $G$ has an individually stable outcome. 
\end{theorem}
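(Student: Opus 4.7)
The plan is to prove membership in NP by noting that, given a partition $\pi$, verifying individual stability takes polynomial time: for each agent $i$ and each coalition $C\in\pi\cup\{\varnothing\}$ we only need to compare the $k$-tuple of ratios of $\pi_i$ and of $C\cup\{i\}$, and to check that no $j\in C$ strictly prefers $C$ to $C\cup\{j\}$. Since every agent's preferences can be stored as a weak order over the $O(n^{2k})$ relevant ratio tuples, this is polynomial for any fixed $k$.

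For hardness I would reduce from \textsc{X3C}, combining two ingredients: (i) a \emph{set-selection} layer, analogous to the construction used in the proof of Theorem~\ref{thm:ns}, that forces each element-agent to land in a coalition whose ratio encodes some 3-set containing her; and (ii) a \emph{triangle-instability} layer based on Example~\ref{ex:kHDG}. Concretely, I would use two classes $R_1,R_2$ to house, for each $x\in X$, an element-agent $a_x\in R_1$, and, for each $A_j\in C$, a small block of filler agents in $R_2$. Using a bijection such as the $f(j)$ map from Theorem~\ref{thm:ns}, I would design preferences so that the unique individually rational coalition-shapes available to $a_x$ correspond bijectively to the sets $A_j$ with $x\in A_j$. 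The remaining three classes $R_3,R_4,R_5$ would host a collection of gadget triples in the spirit of Example~\ref{ex:kHDG}, wired so that each gadget triple is stabilizable exactly when the element-agents it "observes" form a clean set-coalition, and otherwise cycles through IS-deviations as in that example.

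The forward direction is routine: from an exact cover $C'\subseteq C$ I build the outcome with one coalition per set in $C'$ (containing the three corresponding element-agents and their filler block), place every unused filler and every gadget-agent in the intended stable configuration, and verify that the ratio-gaps imposed by $f$ block all IS-deviations. The backward direction is the main obstacle: I must show that if no exact cover exists then every partition contains an IS-deviation. The argument I have in mind is: first argue that in any IS outcome each gadget triple is "defused", which forces the element-agents it guards to sit inside a set-coalition of the prescribed ratio; second, argue that the set-coalitions used in the outcome must be disjoint and cover $X$, which would contradict the assumed non-existence of an exact cover.

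The difficult part will be showing that the two layers do not leak into each other. I need to certify that (a) no gadget-agent can be "absorbed" into a set-coalition (via the filler agents' individual-rationality constraints, controlled by the $f$-trick), (b) no element-agent can stabilize a gadget by deviating into it (by making the gadget's ratios mutually incompatible with $R_1$-agents' approved tuples), and (c) no IS-deviation within the gadget layer can be blocked by introducing an element-agent as a veto, since element-agents in the relevant gadget coalitions would themselves strictly prefer to leave. Each of these requires introducing a few padding agents whose singleton preferences are strict, and a case analysis over the possible local structures of an IS partition. Finally, to extend the reduction from $k=5$ to arbitrary $k\geq 5$, I would add one agent per additional class $R_6,\dots,R_k$, each strictly preferring her singleton to every other coalition; such agents are forced into singletons in every IS outcome and do not interfere with the construction.
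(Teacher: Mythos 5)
Your high-level plan matches the paper's: NP membership by direct checking, and hardness via \textsc{X3C} with a set-selection layer that encodes each $A_j$ as a distinctive ratio tuple, plus a three-class instability gadget in the spirit of Example~\ref{ex:kHDG} acting as a penalizer. But there is a genuine gap at the heart of the reduction: you never specify \emph{how} a gadget ``observes'' whether the element-agents sit in clean set-coalitions, and your stated design goals actually preclude the only mechanism that can make this work. In particular, your requirement (b) --- that no element-agent can enter a gadget coalition because the gadget's ratios are mutually incompatible with $R_1$-agents' approved tuples --- would seal the gadget off from the rest of the game. A sealed-off copy of Example~\ref{ex:kHDG} has no IS outcome under \emph{any} partition, so the constructed game would never admit an IS outcome and the forward direction of the reduction fails; conversely, if the gadget is made unconditionally stabilizable, the backward direction fails. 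The gadget's stability must be \emph{conditional} on the \textsc{X3C} instance, and the only channel for that conditionality is controlled leakage between the two layers --- exactly the thing you set out to prevent.

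The paper resolves this with a single global gadget $\{g,y,w\}$ (one green, one yellow, one white agent) whose individually rational coalitions \emph{include blue set agents}: $g$ approves the ratios $(0,\frac12,\frac12,0,0)$ and $(0,\frac13,\frac13,\frac13,0)$, and each set agent $b_x$ ranks exactly these two gadget ratios strictly between her top (set-encoding) ratios and the homogeneous ratio $(0,1,0,0,0)$. Hence if every set agent is locked in a top coalition, nobody wants to join the gadget and the configuration $\{g\},\{y,w\}$ is stable; if some set agent is stranded in a homogeneous coalition (her only other individually rational option), she becomes available to the gadget, and a short case analysis over the possible coalitions containing $g$ shows that no partition is then IS. To repair your proposal you would need to replace (b) by its opposite: deliberately give element-agents a thin band of individually rational gadget ratios sandwiched between their set-coalition ratios and their fallback, and then carry out that case analysis. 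Your remaining ingredients --- the $f(j)$-style ratio separation for the selection layer, and padding with singleton-preferring agents to go from $5$ classes to arbitrary $k\ge 5$ --- are consistent with the paper and unproblematic.
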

\begin{proof}
To prove membership, note that it is possible to check in polynomial time whether an 
outcome is individually stable by iterating over all agent-coalition pairs and checking 
whether the agent has an IS-deviation to the coalition.

To prove hardness, we construct a reduction from {\sc X3C}. We map a pair 
$(X, C)$ to a diversity game with 5 classes of agents 
(\textbf{r}ed, \textbf{b}lue, \textbf{g}reen, \textbf{y}ellow, \textbf{w}hite).

The general idea of the construction is to map 
each element $x\in X$ to a blue set agent $b_{x}$ such that in 
an individually stable outcome $b_{x}$ needs to be 
in a coalition with ratio $(\frac{1}{j+3}, \frac{j+2}{j+3},0,0,0)$ 
for some $j\in [k]$ such that $x\in A_{j}$. 
This is ensured by an IS-penalizing component `destabilizing' every outcome for 
which this is not the case.
This enables us to map each set $A_{j}\in C$ to a unique coalition 
with ratio $(\frac{1}{j+3}, \frac{j+2}{j+3},0,0,0)$. 
\smallskip

\textbf{Construction: }
For each element $x\in X$, we introduce one blue set agent $b_{x}$ 
with the following preference relation $\succeq'_{b_x}$ over tuples of fractions:
\begin{align*}
b_{x} : &\left(\frac{1}{j_{1}^{x}+3}, \frac{j_{1}^{x}+2}{j_{1}^{x}+3},0,0,0\right)\sim'_{b_{x}} 
\dots \\
&\sim'_{b_{x}} 
\left(\frac{1}{j_{m_{x}}^{x}+3},\frac{j_{m_{x}}^{x}+2}{j_{m_{x}}^{x}+3},0,0,0\right)\succ'_{b_{x}}
\left(0,\frac{1}{3},\frac{1}{3},\frac{1}{3},0\right)\\
&\succ'_{b_{x}} \left(0,\frac{1}{2},\frac{1}{2},0,0\right) 
\succ'_{b_{x}}  \left(0,1,0,0,0\right).
\end{align*}  
Moreover, for each $j\in [k],$ we introduce a red fraction agent $r_{j}$ 
with the following preference relation $\succeq'_{r_j}$: 
$$
r_{j}: \left(\frac{1}{j+3},\frac{j+2}{j+3},0,0,0\right)\succ'_{r_{j}} \left(1,0,0,0,0\right).
$$  
Additionally, for each $j\in [k]$, we introduce $j-1$ redundant 
blue agents with the following preference relations: 
$$
b_{j}^{p}: (\frac{1}{j+3},\frac{j+2}{j+3},0,0,0)\succ'_{b_{j}^{p}} (0,1,0,0,0), 
\text{ } \forall p\in [j-1].
$$  
Finally, we insert an IS-penalizing component consisting of one green, one yellow, 
and one white agent: 
\begin{align*}
g : & \left(0,0,\frac{1}{2},0,\frac{1}{2}\right) \succ'_{g} 
\left(0,\frac{1}{3},\frac{1}{3},\frac{1}{3},0\right) \\ 
&\succ'_{g} \left(0,\frac{1}{2},\frac{1}{2},0,0\right)\succ'_{g}  \left(0,0,1,0,0\right), \\
y : & \left(0,\frac{1}{3},\frac{1}{3},\frac{1}{3},0\right)\succ'_{y} 
\left(0,0,0,\frac{1}{2},\frac{1}{2}\right)\succ'_{y} \left(0,0,0,1,0\right), \\
w : & \left(0,0,0,\frac{1}{2},\frac{1}{2}\right) \succ'_{w} 
\left(0,0,\frac{1}{2},0,\frac{1}{2}\right)\succ'_{w} \left(0,0,0,0,1\right).
\end{align*}

\textbf{Correctness: }$(\Rightarrow)$ 
Assume that there exists a partition $\pi \subseteq C$ of $X$. We will transform 
$\pi$ into an individually stable outcome $\pi'$ of the respective $5$-tuple hedonic 
diversity game. For each $A_{j}\in \pi,$ the partition $\pi'$ contains a coalition 
$P_{j}=\{b_{x}\mid x\in A_{j}\}\cup \{r_{j}\}\cup \{b_{j}^{p}\mid p\in [j-1]\}$.  
We refer to these coalitions as the first part of $\pi'$. 
Further, we put all remaining red and blue agents 
in singleton coalitions, add them to $\pi'$, and refer to them as the second part of 
$\pi'$. Finally, we add the coalitions $\{g\}$ and $\{y,w\}$ to $\pi'$.

In the following, we prove that no agent has an IS-deviation in $\pi'$ by iterating over all 
coalitions and arguing that no agent has an IS-deviation to this coalition. First, every 
coalition in the first part of $\pi'$ is the unique most preferred coalition of its fraction agent. 
Thereby, no fraction agent accepts a deviation to her coalition. Second, no agent has an 
IS-deviation to a coalition from the second part, as for no agent in the second part being in 
a coalition where players of her class account for $\frac{1}{2}$ is individually rational. 
Third, no agent has an IS-deviation to $\{y,w\}$, as this is $w$'s most preferred 
coalition. Fourth, no agent wants to deviate to $\{g\}$, as all set agents and $w$ are in 
one of their most preferred coalitions. Fifth, no agent has an IS-deviation to being in a 
singleton coalition, as $\pi'$ is individually rational. Consequently, $\pi'$ is individually 
stable.

$(\Leftarrow)$ Let us assume that there exists an individually stable outcome $\pi$ of the 
hedonic diversity game. We claim that every blue set agent is in one of her most preferred 
coalitions in $\pi$. For the sake of contradiction, assume that there exists some set 
agent $p$ who is not in one of her most preferred coalitions. 
In the following, we iterate over all possible individually rational coalitions 
that $g$ can be part of, and show that $\pi$ cannot be individually stable if it contains this 
coalition.

If $\{g\}\in \pi$, $p$ needs to be in a homogeneous coalition, 
as this is her only remaining individually rational coalition. 
Thereby, $p$ has an IS-deviation to $\{g\}$. 
If $\{g,p\}\in \pi$, agent $y$ has an IS-deviation to this coalition. 
If $\{g,y,p\} \in \pi$ then $w$ needs to be in a singleton coalition, 
as this is her only remaining individually rational coalition. 
Thereby, $g$ has an IS-deviation to $\{w\}$. 
If $\{g,w\} \in \pi$, $w$ has an IS-deviation to the coalition $\{y\}$, 
as $\{y\}$ is the only remaining individually rational coalition for $y$.

Consequently, there is no individually stable outcome where a set agent is not in one of her most 
preferred coalitions. Let $\{P_{1},...,P_{t}\}\subseteq \pi$ be the set of all 
coalitions containing at least one set agent. Then, for all $\ell\in [t]$ it holds that 
$\theta(P_\ell)=(\frac{1}{j+3}, \frac{j+2}{j+3},0,0,0)$ for some $j\in [k]$. As 
$(\frac{1}{j+3}, \frac{j+2}{j+3},0,0,0)$ is only individually rational for exactly one red 
agent, $P_\ell$ needs to consist of one red and $j+2$ blue agents. Now, recall 
that this tuple is individually rational for exactly $j+2$ blue agents including the 
three corresponding set agents. It follows that all three blue set agents who belong to 
$A_{j}$ and no other set agents are part of $P_\ell$.  Consequently, by removing all non-blue 
agents from the coalitions in $\{P_{1},...,P_{t}\},$ we obtain a cover of $X$ 
by sets in $C$.

Note that the proof also goes through if no indifferences in the agents' preferences are allowed, as 
it is possible to replace the indifferences in the set agents' preferences by strict 
preferences in an arbitrary way without affecting the validity of the proof.
\end{proof}

Now, in $k$-tuple HDGs, an agent may have very complex preferences over ratios
of agents from different classes. 
In our second model, an agent does not distinguish among classes other than her own
and hence only cares about the fraction of members of her class 
in her coalition; we will refer to these games as $k$-HDGs. We note that
a similar approach has been used recently to provide a game-theoretic
model of Schelling segregation with $k>2$ agent classes \citep{jump,echzell}; 
this line of work, 
while similar in spirit to hedonic diversity games, is, however, 
very different from a technical perspective. 

\begin{definition}\label{def:k-second}
A {\em hedonic diversity game with $k$-classes ($k$-HDG)} is a hedonic game 
$(N, (\succeq_i)_{i\in N})$ where $N$ can be partitioned into $k$
pairwise disjoint sets $R_1, \dots, R_k$ so that for each $j\in [k]$, 
each agent $i\in R_j$,
and every pair of coalitions $C, D\in\calN(i)$ we have $C\sim_i D$
as long as $\frac{|C\cap R_j|}{|C|}=\frac{|D\cap R_j|}{|D|}$.
\end{definition}
By definition, every $k$-HDG is a $k$-tuple HDG, but the converse is not true,
e.g., the game in Example~\ref{ex:kHDG} is not a $k$-HDG for any value of $k$.
Further, a $2$-HDG is simply an HDG as defined by Bredereck et al.
Unlike $k$-tuple HDGs, $k$-HDGs admit a succinct representation even if $k$ is not bounded
by a constant: for each agent $i\in R_j$, her preferences over coalitions in $\calN(i)$ 
can be described by her preferences over fractions of the form $\frac{\ell}{r}$, 
where $r\in [n]$, $\ell\in[\min\{|R_j|, r\}]$.

Remarkably, this formalism captures anonymous games.

\begin{proposition}
Every anonymous game can be represented as a $k$-HDG.
\end{proposition}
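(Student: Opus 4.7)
The plan is to take the most direct encoding: given an anonymous game $(N,(\succeq_i)_{i\in N})$ with $n=|N|$, treat it as an $n$-HDG by putting each agent in her own class, i.e., set $k=n$ and $R_j=\{j\}$ for each $j\in[n]$. The key observation is that for the unique agent $i\in R_i$ and any coalition $C\in\calN(i)$, we have $|C\cap R_i|=1$, so the ratio $\frac{|C\cap R_i|}{|C|}=\frac{1}{|C|}$ is a bijective function of the coalition size $|C|\in[n]$. In particular, the $k$-HDG indifference condition $\frac{|C\cap R_i|}{|C|}=\frac{|D\cap R_i|}{|D|}$ is equivalent to $|C|=|D|$, so any preferences compatible with the $k$-HDG format automatically depend only on coalition size, matching the anonymity condition.

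With this class structure in place, I would define agent $i$'s $k$-HDG preferences over fractions of the form $\frac{1}{r}$, $r\in[n]$, by $\frac{1}{r}\succeq'_i\frac{1}{s}$ iff $r\succeq^*_i s$ in the anonymous preference $\succeq^*_i$, and leave the preferences on all other fractions arbitrary (they are unreachable from $i$'s perspective, since her own coalition always contains her). Then for any two coalitions $C,D\in\calN(i)$, we have $C\succeq_i D$ in the constructed $k$-HDG iff $\frac{1}{|C|}\succeq'_i\frac{1}{|D|}$ iff $|C|\succeq^*_i|D|$ iff $C\succeq_i D$ in the original anonymous game, so the two games coincide. I would close by noting that the construction satisfies Definition~\ref{def:k-second} by design, which is all that is required.

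There is no real obstacle here: the only thing to check carefully is that the induced preference on fractions is well-defined and that agents are indeed indifferent only between coalitions of equal size, both of which follow immediately from the bijection $|C|\mapsto\frac{1}{|C|}$ on the relevant set of ratios.
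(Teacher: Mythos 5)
Your proposal is correct and follows exactly the paper's own argument: place each agent in a singleton class $R_i=\{i\}$, observe that the fraction of agents from $R_i$ in any coalition $C\in\calN(i)$ is $\frac{1}{|C|}$, so indifference classes of the $k$-HDG coincide with coalition sizes, and transfer the anonymous preference $\succeq^*_i$ over $[n]$ to the ratios $\frac{1}{r}$. The extra care you take about well-definedness and unreachable fractions is fine but not needed beyond what the paper states.
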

\begin{proof}
Recall that an anonymous game $G = (N, (\succeq_i)_{i\in N})$
with $|N|=n$ can be equivalently represented by a collection $(\succeq^*_i)_{i\in N}$ 
of weak orders over $[n]$: $C\succeq_i D$ if and only if $|C|\succeq^*_i |D|$.
It follows that $G$ can be viewed as an $n$-HDG with partition
$N=R_1\cup\ldots\cup R_n$, so that $R_i=\{i\}$ for each $i\in N$. Indeed, for each 
$i\in N$ and each coalition $C\in\calN(i)$,
the fraction of agents from class $R_i$ in $C$ is exactly $\frac{1}{|C|}$,
so two coalitions $C, D\in\calN(i)$ have the same fraction of agents from 
$R_i$ if and only if they have the same size.
\end{proof}

It follows that $k$-HDGs inherit negative results for anonymous games, 
such as non-existence of IS outcomes \citep{Bogomolnaia2002} and 
hardness of deciding whether a given game has an IS outcome \citep{Ballester2004}.

\begin{corollary}\label{cor:shdg}
There exists a $k$-HDG that has no individually stable outcome. Moreover, 
deciding if a given $k$-HDG has an individually stable outcome is NP-complete.
\end{corollary}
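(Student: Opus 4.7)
The plan is to leverage the preceding proposition, which embeds every anonymous game into the class of $k$-HDGs by placing each agent in a singleton class $R_i=\{i\}$. Under this embedding, an agent $i$'s only relevant quantity is $\frac{|C\cap R_i|}{|C|}=\frac{1}{|C|}$, which encodes exactly the same information as $|C|$. Hence a partition $\pi$ is individually stable in the anonymous game if and only if it is individually stable in the corresponding $k$-HDG, since the set of IS-deviations coincides.

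For the first assertion, I would invoke the anonymous game of \citet{Bogomolnaia2002} that admits no individually stable outcome and apply the embedding to obtain a $k$-HDG with no IS outcome.

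For the second assertion, membership in NP is immediate: given a candidate partition $\pi$, one can verify in polynomial time that no agent has an IS-deviation by iterating over all pairs $(i,C)$ with $i\in N$ and $C\in\pi\cup\{\varnothing\}$ and checking the two conditions in the definition of an IS-deviation. For hardness, I would reduce from the NP-complete problem of deciding whether a given anonymous game has an individually stable outcome~\citep{Ballester2004}: given an instance, output its $k$-HDG representation as described in the proposition. Correctness follows from the bijection on IS outcomes noted above.

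The only subtlety is ensuring the reduction is truly polynomial, i.e., that the $k$-HDG representation is succinct. This is guaranteed by the remark preceding the proposition: each agent $i\in R_j$ has preferences over fractions $\frac{\ell}{r}$ with $r\in[n]$ and $\ell\in[\min\{|R_j|,r\}]$, so the representation has size polynomial in $n$ even when $k=n$. Hence the reduction runs in polynomial time, completing the argument.
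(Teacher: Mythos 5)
Your proposal is correct and matches the paper's (implicit) argument exactly: the corollary is derived by embedding anonymous games as $k$-HDGs via the preceding proposition and then importing the non-existence example of \citet{Bogomolnaia2002} and the NP-hardness result of \citet{Ballester2004}, with NP membership by direct verification. The additional care you take about the succinctness of the representation and the preservation of IS-deviations under the embedding is sound and only makes explicit what the paper leaves implicit.
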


Now, \citet{Bogomolnaia2002} show that every single-peaked anonymous game 
has an IS outcome. The definition of single-peaked preferences extends 
naturally to $k$-HDGs, e.g., 
we can use essentially the same definition as for HDGs. However, it is not clear
if the algorithm of \citet{Bogomolnaia2002} for finding an IS outcome in single-peaked anonymous 
games can be extended to single-peaked $k$-HDGs; this is an interesting question
for future work. Note also that the hardness result of Corollary~\ref{cor:shdg}
only holds for $k=n$; it is not clear if the problem of finding an IS outcome
in $k$-HDGs remains hard for small values of $k$ (e.g., $k=3$).
Further, the anonymous game with no IS outcomes constructed by 
\citet{Bogomolnaia2002} has 63 agents and therefore
translates into a $63$-HDG; it remains an open problem 
whether $k$-HDGs with $k<63$ are guaranteed to have an IS-outcome.

%%%%%%%%%%%%%%%%%%%%%%%%%%%%%%%%%%%%%%%%%%%%%%%%%%%%%%%%%%%%%%%%%%%%%%%%%%%%%%%%%
\section{Empirical Analysis}\label{sec:emp}
The algorithm for computing individually stable outcomes described in 
Section~\ref{sec:is} has two substantial advantages over the algorithm of Bredereck et al.:
first, it works for general preferences, and second, it is much simpler. However, 
as illustrated by Example~\ref{ex:is-bad}, Bredereck et al.'s algorithm may result
in higher agents' satisfaction. The IS-BRD algorithm is even simpler, but 
we do not know if it always converges to an IS outcome. To better understand the
performance of these three algorithms, in this section, we empirically compare them 
with respect to three measures: the average social welfare,
the average coalition size and the average diversity. 

\smallskip

\noindent{\bf Preference Models\ }
As Bredereck et al.'s algorithm is only defined for single-peaked HDGs, 
we only use single-peaked instances in our analysis. We consider three 
intuitively appealing ways
of sampling preferences over ratios that are single-peaked on $\Theta$.
\begin{description}
\item
{\bf Uniform single-peaked preferences (uSP).} For each agent $i$, we sample $\succ'_i$
uniformly at random among all single-peaked strict orders on $\Theta$, using the algorithm of
\citet{Walsh} (see also \citet{LL17}).
\item
{\bf Uniform-peak single-peaked preferences (upSP).} To generate $\succ'_i$, 
we first select $i$'s most preferred ratio 
by choosing a point in $\Theta$ uniformly at random.
We then continue to place elements of $\Theta$ in positions
$2, \dots, |\Theta|$ of $\succ'_i$ one by one; when $k<|\Theta|$ elements have been ranked, 
there are at most two elements of $\Theta$ that can be placed in position $k+1$
so that the resulting ranking is single-peaked on $\Theta$, and we choose between them
with equal probability. This approach to sampling single-peaked preferences
was popularized by \citet{ConitzerSP}.
\item
{\bf Symmetric single-peaked preferences (symSP).} For each agent $i$, we choose her preferred
point $\theta_i$ from the uniform distribution on $[0, 1]$ and define the relation 
$\succeq'_i$ so that $\theta\succeq'_i \theta'$ if and only if 
$|\theta-\theta_i|\le |\theta'-\theta_i|$. While theoretically the resulting
relation may have ties, in our experiments this approach always generated strict orders.
\end{description}
The first two distributions are quite different from each other,
e.g., in the uSP model a ranking where $0$ appears first is exponentially less likely 
than a ranking where $\frac12$ appears first, while in the upSP model
we are equally likely to see $0$ and $\frac12$ ranked first. 
On the other hand, upSP and symSP appear to be fairly similar, 
but our experiments show that our algorithms behave differently on them. 

\smallskip

%%%%%%%%%%%%%%%%%%%%%%%%%%%%%%%%%%%%%%%%%%%%
\noindent{\bf Performance measures\ }
The primary measure we are interested in is the social welfare, i.e., 
the sum of agents' utilities. However, in general, 
this measure is difficult to define, since in HDGs
agents' preferences over coalitions are given by weak orders rather than numerical
values. For symmetric single-peaked preferences,  
we can circumvent this difficulty by defining an agent's
disutility as the difference between the fraction of red agents in her 
coalition and her ideal ratio, so, given a partition $\pi$, 
for each $i\in N$ we set 
$$
\omega(i)=1 - \left|\frac{|\pi_i\cap R|}{|\pi_i|}-\theta_i\right|.
$$
For uniform and uniform-peak single-peaked preferences, 
we identify the utility of agent
$i$ in partition $\pi$ with the Borda score of 
$\theta(\pi_i) = \frac{|\pi_i\cap R|}{|\pi_i|}$ in $\succ'_i$: 
for each $\theta\in \Theta$, we set 
$$
\beta(i, \theta)=|\{\theta'\in \Theta: \theta\succ'_i \theta'\}|\ \text{ and }
\omega(i)=\frac{\beta(i, \theta(\pi_i))}{|\Theta|-1}.
$$ 
%NB: I have replaced $\beta(i, \theta(\pi_i))$ by $$
In both cases, we define the average welfare as
$$
\omega(\pi)=\frac{1}{n}\sum_{i\in N}\omega(i).
$$
To gain additional insight into the behavior of our algorithms, 
we also consider two other measures, namely, 
the average coalition size and the average diversity.
We measure the diversity of a coalition $C$
as $\delta(C)=1-2\left|\frac{1}{2}-\frac{|C\cap R|}{|C|}\right|$;
note that $\delta(C)=0$ if $C$ is homogeneous 
and $\delta(C)=1$ if $C$ is balanced.
Both for the average size and for the average diversity, 
we average over all agents in the partition
rather than all coalitions, as we focus on the experience of an individual agent.
That is, the average coalition size and the average diversity are defined, respectively, 
as 
$$
\mu(\pi)=\frac{1}{n}\sum_{i\in N}|\pi_i| \ \ \text{ and }\ \ 
\delta(\pi)=\frac{1}{n}\sum_{i\in N}\delta(\pi_i).
$$
%%%%%%%%%%%%%%%%%%%%%%%%%%%%%%%%%%%%%%%%%%%%%%%%%%%%%%%%%%
\begin{figure}[t]
\centering
\includegraphics[scale=0.6]{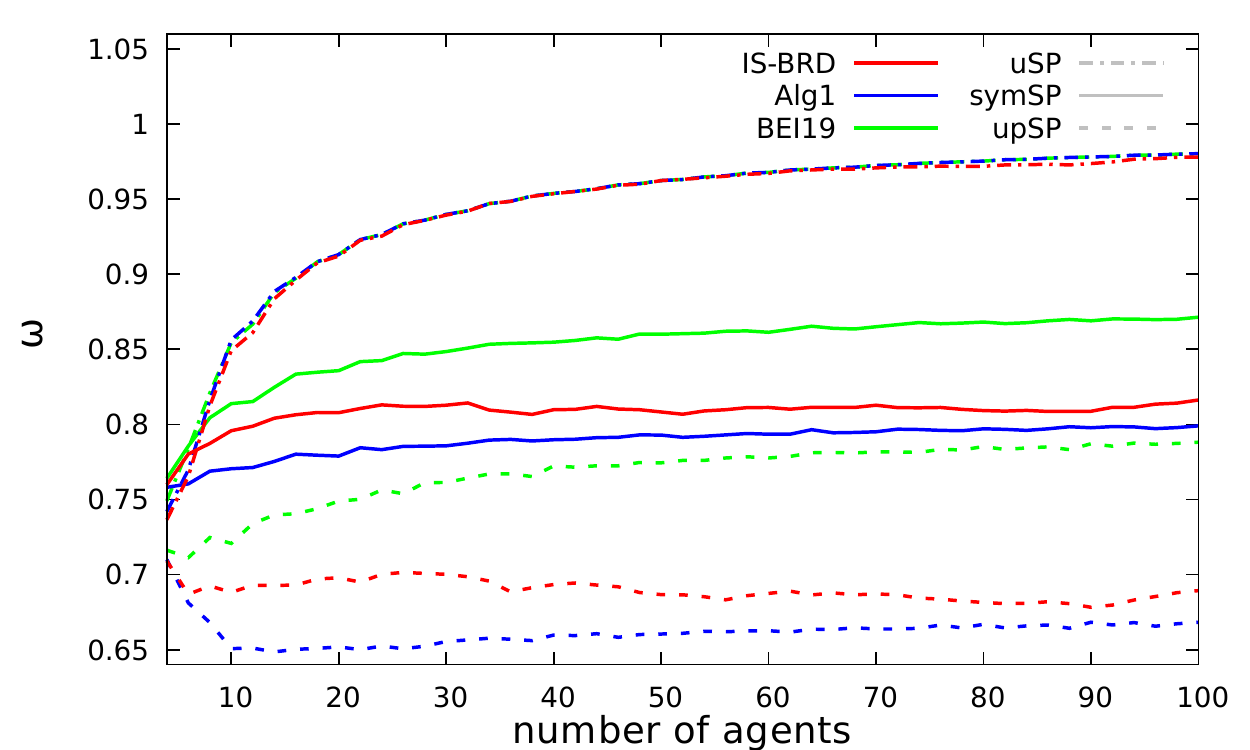}
\caption{Average social welfare} 
\label{fig:sw}
\end{figure}
\begin{figure}[t]
\centering
\includegraphics[scale=0.6]{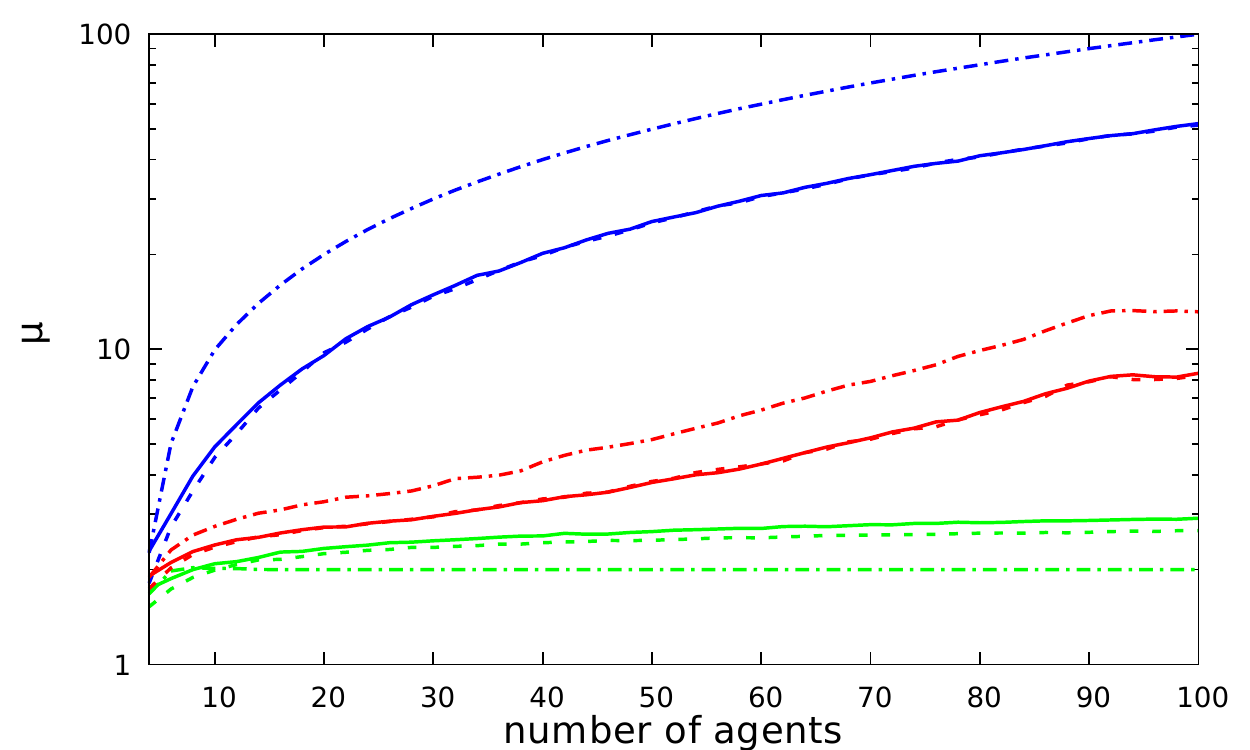}
\caption{Average coalition size} 
\label{fig:size}
\end{figure}
\begin{figure}[t]
\centering
\includegraphics[scale=0.6]{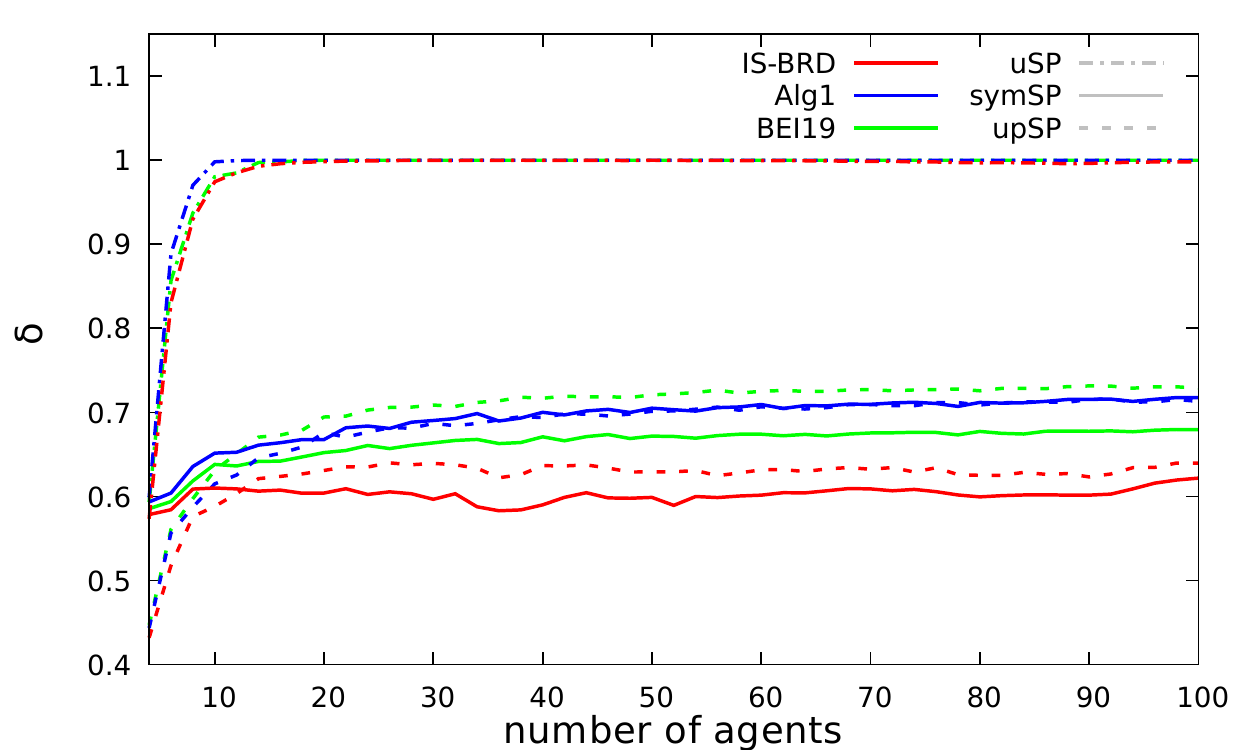}
\caption{Average diversity} 
\label{fig:mix}
\end{figure}

\smallskip

\noindent{\bf Results\ }
The results of our experiments are shown in Figures~\ref{fig:sw}--\ref{fig:mix}.
In all graphs, 
BEI19 refers to the algorithm of Bredereck et al., Alg1 refers to our 
Algorithm~\ref{alg:IS} and IS-BRD refers to 
the IS-BRD algorithm that starts with a uniformly random partition of agents 
and at each iteration chooses the deviation uniformly at random from among all 
available deviations. Remarkably, on every instance we have generated, 
IS-BRD converges in $O(n^2)$ iterations.

We first consider HDGs with the same number
of red and blue agents. For each $s=2, \dots, 50$, we generate 
1000 HDGs with $s$ red and $s$ blue agents; thus, $n=2s$ takes even values 
from 4 to 100.

Our results for social welfare are shown in Figure~\ref{fig:sw}. 
For uSP, all algorithms perform similarly and provide 
fairly high social welfare. Intuitively, this is because under this preference 
model the agents are likely to rank the ratio $\frac12$ highly, and therefore
they are quite happy to be in a coalition with an approximately equal number of 
red and blue agents. Fortunately, all of our algorithms are good at partitioning the agents
into such coalitions. In particular, under this distribution 
it is likely that each agent weakly prefers a balanced coalition to a homogeneous coalition, 
so Algorithm~\ref{alg:IS} will be able to stop after setting $C_0=N$,
and the grand coalition will be well-liked by most agents.
However, the other two distributions tell a different story: under these distributions 
the algorithm of Bredereck et al.~substantially outperforms the other two algorithms, 
with IS-BRD being consistently better than Algorithm~\ref{alg:IS}. Thus, if the distribution
of agents' top choices is close to uniform and the numbers of red and blue agents
are equal, the more complex algorithm of Bredereck et 
al.~has an advantage over our approach.

Figure~\ref{fig:size} shows that the three algorithms are very different
in terms of the sizes of coalitions they produce: Algorithm~\ref{alg:IS}
tends to produce the largest coalitions, and the algorithm of Bredereck et al.~
tends to produce the smallest coalitions; this holds irrespective of how
we sample the agents' preferences.

Figure~\ref{fig:mix} shows that all three algorithms achieve almost perfect diversity
under the uSP distribution; we have suggested possible reasons for this 
when discussing the social welfare. For the other two distributions, 
Algorithm~\ref{alg:IS} and the algorithm of Bredereck et al.~perform similarly,
while IS-BRD results in somewhat less diverse partitions.
  
We also investigate what happens if the two agent classes have different sizes.
\begin{figure}[h]
	\centering
	\includegraphics[scale=0.6]{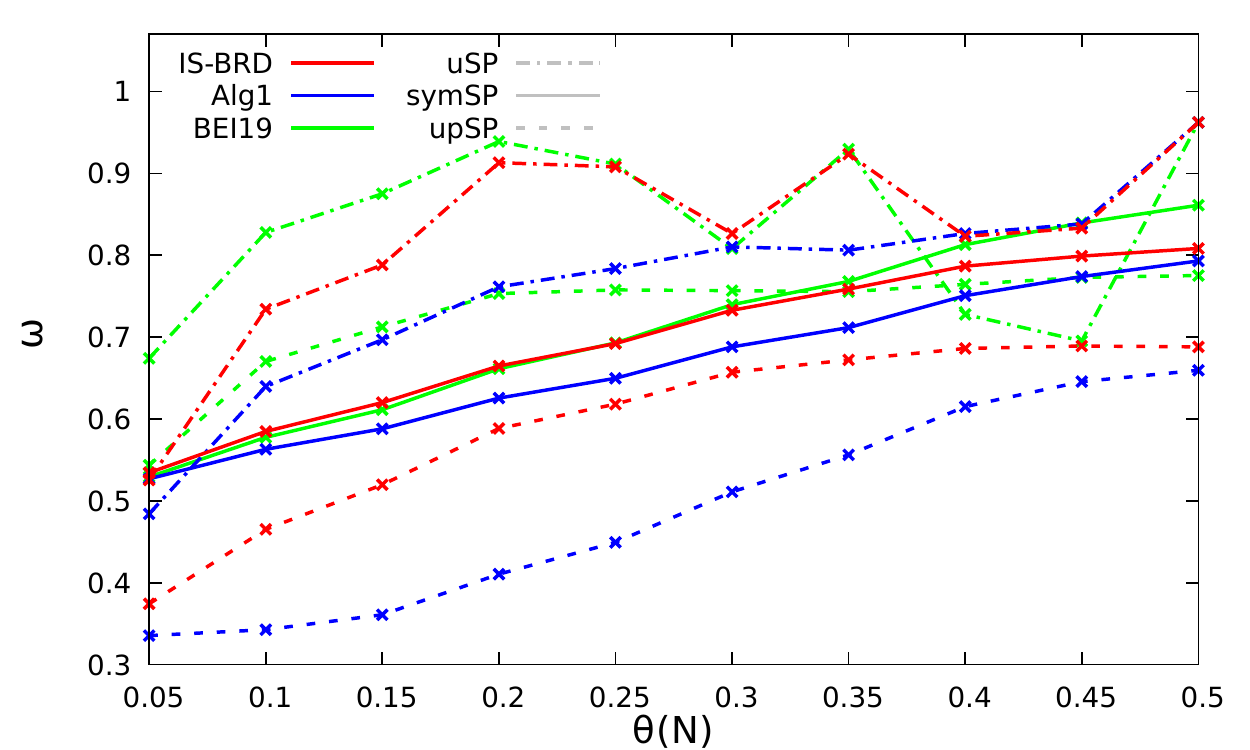}
	\caption{Average social welfare} 
	\label{fig:sw2}
\end{figure}
In Figure~\ref{fig:sw2}, we graph the average social welfare of the individually 
stable outcomes obtained
by the three algorithms we consider, as the number of agents is fixed to $n=50$
and the fraction of red agents in the game (denoted by $\theta(N)$) 
changes from $0$ to $0.5$. 
In this setting, too, the algorithm of Bredereck et al.~outperforms
the other two algorithms with respect to the social welfare, for almost all values
of $\theta(N)=\frac{|R|}{|R|+|B|}$; the difference is the most pronounced
for the upSP distribution. However, for the uSP distribution, there is a range
of values of $\theta(N)$ where Algorithm~\ref{alg:IS} has better performance than
the algorithm of Bredereck et al.; this can be attributed to the fact that 
the latter algorithm only produces coalitions that are very unbalanced or perfectly balanced, 
while Algorithm~1 can output coalitions with more complex ratios. %of red and blue agents.

%%%%%%%%%%%%%%%%%%%%%%%%%%%%%%%%%%%%%%%%%%%%%%%%%%%%%%%%%%%%%%%%%%%%%%%%%%%%%%%%%
\section{Conclusions}\label{sec:concl}
Our work contributes to the study of hedonic diversity games---an 
interesting class of coalition formation games introduced 
by Bredereck et al. 
We have focused on stability concepts that deal with 
deviations by individual agents, namely, Nash stability and individual stability.
%Our work clarifies the relationship between HDGs and anonymous games; 
%intriguingly, while HDGs appear to be more complex 
%than anonymous games, they always admit an IS outcome, while for anonymous games
%this is not the case.
%
Remarkably, while our algorithm
for finding IS outcomes is theoretically more appealing 
than the algorithm of Bredereck et al., in that it is simpler and more general, 
empirically the latter algorithm has better performance
for the domain on which it is defined. Perhaps the most interesting 
open question concerning individual stability is whether IS-BRD
is guaranteed to converge to an IS outcome, and if yes, whether convergence
always happens after polynomially many iterations; we note that, 
in our experiments, this is always the case. The same question
can be asked in the context of anonymous games.
We also feel that the $k$-HDG model deserves further attention:
in particular, we would like to understand the complexity 
of finding IS outcomes for small values of $k$ and/or for
single-peaked preferences.

\smallskip

\noindent{\bf Acknowledgements\ }
This work was supported by 
a DFG project ``MaMu'', NI 369/19 (Boehmer) and by
an ERC Starting Grant ACCORD under Grant Agreement 639945 (Elkind).
\bibliographystyle{aaai}
\bibliography{abb,hedonic-refs}
\end{document}